\newtheorem{theorem}{Theorem}
\newtheorem{claim}[theorem]{Claim}
\newtheorem{conjecture}{Conjecture}
\newtheorem{corollary}[theorem]{Corollary}
\newtheorem{lemma}[theorem]{Lemma}
\newtheorem{proposition}[theorem]{Proposition}
\theoremstyle{definition}
\newtheorem{proviso}[theorem]{Proviso}
\theoremstyle{remark}
\newtheorem{remark}[theorem]{Remark}
\newenvironment{sm}{\left(\begin{smallmatrix}}
                   {\end{smallmatrix}\right)}
\newcommand{\Arc}{{\rm Arc}}
\newcommand{\Area}{{\rm Area}}
\newcommand{\B}{\mathcal B}
\newcommand{\Base}{{\rm Base}}
\newcommand{\C}{\mathbb C}
\newcommand{\eps}{\varepsilon}
\newcommand{\Level}{{\rm Level}}
\newcommand{\NE}{{\rm NE}}
\newcommand{\SL}{{\rm SL}}
\newcommand{\SO}{{\rm SO}}
\newcommand{\SU}{{\rm SU}}
\newcommand{\TD}{{\rm TD}}
\newcommand{\Tr}{{\rm Tr}}
\newcommand{\Z}{{\mathbb Z}}
\renewcommand{\phi}{\varphi}
\begin{document}

\title{Optimal Ancilla-free Pauli+V Circuits for Axial Rotations}

\author{Andreas Blass}
\affiliation{Mathematics, University of Michigan, Ann Arbor, MI, USA}
\author{Alex Bocharov}
\author{Yuri Gurevich}
\affiliation{Microsoft Research, Redmond, WA, USA}

\begin{abstract}
Recently Neil Ross and Peter Selinger analyzed the problem of approximating $z$-rotations by means of single-qubit Clifford+T circuits. Their main contribution is a deterministic-search technique which allowed them to make approximating circuits shallower.

We adapt the deterministic-search technique to the case of Pauli+V circuits and prove similar results. Because of the relative simplicity of the Pauli+V framework, we use much simpler geometric methods.
\end{abstract}

\maketitle

\section{Introduction}
\label{sec:intro}

For any universal basis $\B$ for single-qubit circuits, this natural problem arises: Given a single-qubit gate $G$ and real $\eps>0$, construct an ancilla-free $\B$-circuit that approximates $G$ with precision $\eps$. There is a well-known elementary reduction of this problem to its special case where $G$ is a $z$-rotation 
$R_z(\theta)= \begin{sm}e^{-i\theta/2}&0\\ 0&e^{i\theta/2}\end{sm}$ [Ref.~\onlinecite{QCQI}]. The reduction does not work for all bases but it works for Clifford+T, for Pauli+V and for most other universal single-qubit bases in the literature. We restrict attention to the special case.

The matrix of the approximating $\B$-circuit has the form
$\begin{sm}u&-v^*\\ v&u^*\end{sm}$.
In many cases, including those of Clifford+T and Pauli+V, the circuit can be efficiently constructed from the matrix. The problem becomes just to find an appropriate pair $(u,v)$ of complex numbers.

In article [Ref.~\onlinecite{Selinger}], Peter Selinger introduced a randomized-search technique for finding a desired pair $(u,v)$ in the Clifford+T framework. The result was an efficient probabilistic circuit-synthesis algorithm for the Clifford+T basis. ``Under a mild hypothesis on the distribution of primes, the expected running time of the probabilistic algorithm is polynomial in $\log(1/\eps)$, and the depth of the resulting approximating circuit is $O(1/\eps)$. If the gate to be approximated is a $z$-rotation, the T-count of the approximating circuit is $4\log_2(1/\eps) + O(1)$.''

Let V be the set of 3 unitary operators
\[
  V_1 = (I + 2iX)/\sqrt{5},\quad
  V_2 = (I + 2iY)/\sqrt{5},\quad
  V_3 = (I + 2iZ)/\sqrt{5}
\]
where $I$ is the identity matrix and $X,Y,Z$ are the single-qubit Pauli matrices. The group generated by the three operators was introduced and studied in [Refs.~\onlinecite{LPSI,LPSII}].  The use of the V basis for quantum computing was initiated and studied in [Ref.~\onlinecite{HRC}].

Using the randomized-search technique, two of the present authors and Krysta Svore developed a probabilistic algorithm analogous to Selinger's for Pauli+V (and thus for Clifford+V) circuits [Ref.~\onlinecite{BGS}]. Under a conjecture on the distribution of primes, the expected running time of their algorithm is polynomial in $\log(1/\eps)$, and the depth of the resulting circuit is $4\log_5(1/\eps) + O(1)$. The conjecture is rather credible and purely number-theoretic.

Later, also in the framework of Clifford+T, Ross and Selinger replaced the randomized-search technique with an even more efficient deterministic-search technique [Ref.~\onlinecite{RoSelinger}]. Under a hypothesis on the distribution of primes, the expected running time of the new circuit-synthesis algorithm is polynomial in $\log(1/\eps)$, and the T-count of the approximating circuit is $3\log_2(1/\eps) + O(\log\log(1/\eps))$. If an oracle to factor integers is available (e.g. Shor's factoring algorithm), the approximating circuit has the minimal possible depth.

Ross and Selinger suggested that some other universal quantum bases may be amenable to a similarly optimal synthesis. We show in this paper that this suggestion is realized in the case of the Pauli+V basis. The deterministic-search technique simplifies quite substantially in the Pauli+V case.

We present two circuit-synthesis algorithms using credible number-theoretic conjectures. The first synthesis algorithm runs in expected polynomial time with respect to $\log\frac1\eps$ and constructs a Pauli+V circuit of depth at most $3\log_5\frac1\eps + O(\log\log\frac1\eps)$ that approximates the axial rotation $R_z(\theta)$ within $\eps$. The second synthesis algorithm takes an additional input parameter, namely an error tolerance $\delta>0$. It runs in time polynomial in $\log\frac1\eps$ and $\log\frac1\delta$, and it is in fact fast and practical. It returns Nil or constructs a Pauli+V circuit of depth at most $3\log_5\frac1\eps + O(\log\log\frac1\eps)$ that approximates the axial rotation $R_z(\theta)$ within $\eps$. The probability of returning Nil is at most $\delta$.

It may be important to find a similar solution in the framework of Fibonacci circuits described e.g. in [Ref.~\onlinecite{KBS}].

\medskip\noindent{\bf Related Work}\quad
After our result was pre-announced in [Ref.~\onlinecite{BSR}], Neil Ross published another confirmation that the deterministic-search technique works for the Pauli+V basis [Ref.~\onlinecite{Ross}]. Our solution is simpler, conceptually and algorithmically.

\section{Preliminaries}
\label{sec:prelim}

\subsection{Geometry}\label{sub:geometry}

Consider the complex plane $\C$. The real and imaginary axes meet at point $0$ which is the origin of the coordinate system and will be denoted $O$. Every nonzero point $z\in\C$ can be viewed as a vector from the origin $O$ to point $z$. It will be clear from the context when a point is treated as a vector.

If $u,v$ are distinct points of the plane then $[u,v]$ is the straight-line segment between $u$ and $v$, and $|u,v|$ is the length of $[u,v]$. If, in addition, points $u,v$ are on the unit circle around $O$ and are not the opposites of each other, let $\Arc(u,v)$ be the shorter arc of the unit circle between $u$ and $v$.

Given a positive real $\eps < 1$, consider a circular segment, or meniscus, $M_0 = \{u:\: 1-\eps^2 < \Re(u) \le 1\}$ of the unit disk, centered around point $1$. Given a real number $\theta$, rotate $M_0$ to the angle $-\theta/2$ around $O$; the resulting meniscus is centered around the point $e^{-i\theta/2}$ and will be denoted $M_\eps(\theta)$ so that $M_0 = M_\eps(0)$. Menisci play important role in our approximation problem.

Given a meniscus $M = M_\eps(\theta)$, centered around the point $z = e^{-i\,\theta/2}$, let $z_1,z_2$ be the two corner points of the meniscus. Let $z_0 = (z_1+z_2)/2$ and let $z_3$ be the intersection point of the tangent lines to the unit circle at $z_1$ and $z_2$. The following terminology will be useful.
\begin{itemize}
  \item The chord $[z_1,z_2]$ is the \emph{base} of $M$, and the vectors $z_1-z_2$ and $z_2 - z_1$ are the \emph{base vectors} of $M$.
  \item $\Arc(z_1,z_2)$ is the \emph{arc} of $M$.
  \item The vector $z  - z_0$ is the \emph{handle} of $M$. Note that the handle uniquely defines the meniscus.
  \item The isosceles triangle formed by points $z_1, z_2, z_3$ is the \emph{enclosing triangle} of $M$. The base of $M$ is also the base of the triangle, and $[z_0,z_3]$ is the \emph{median} of the triangle.
\end{itemize}

\begin{lemma}\label{lem:arc}
Let $M$ be a meniscus $M_\eps(\theta)$ with base $b$ and handle $h$, and let $\mu$ and $s$ be the median and one of the two equal sides of the enclosing triangle of $M$. Then
\begin{itemize}
  \item $|h| = \eps^2$ and $|b|\approx 2\eps\sqrt2$.
  \item $\Arc(M) \approx 2\eps\sqrt2$.
  \item $|\mu| \approx 2\eps^2$.
  \item $|s| \approx \eps\sqrt2$.
\end{itemize}
\end{lemma}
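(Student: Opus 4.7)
The plan is to reduce to $\theta=0$ and then do the trigonometry in closed form, reading off the asymptotics as $\eps\to 0$.

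First, observe that rotation about $O$ is an isometry that sends $M_\eps(0)$ to $M_\eps(\theta)$ and carries base, handle, enclosing triangle, median and side of one meniscus to those of the other. Hence all four quantities in the lemma depend only on $\eps$, and we may assume $\theta=0$. In that case $z=1$, and the corner points of $M$ are the two points on the unit circle with real part $1-\eps^2$, namely $z_{1,2} = (1-\eps^2) \pm i\,\eps\sqrt{2-\eps^2}$. Write $\alpha$ for the angle that the radius $Oz_1$ makes with the positive real axis, so $\cos\alpha = 1-\eps^2$ and $\sin\alpha = \eps\sqrt{2-\eps^2}$. This is the one computation everything else hangs on.

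Next I compute the four quantities exactly in terms of $\alpha$. The midpoint $z_0=(1-\eps^2,0)=(\cos\alpha,0)$ lies on the real axis, so $h = z - z_0$ has length $1-\cos\alpha = \eps^2$ exactly. The base length is $|z_1-z_2| = 2\sin\alpha = 2\eps\sqrt{2-\eps^2}$. The arc of $M$ subtends the central angle $2\alpha$, so $\Arc(M)=2\alpha$. For the enclosing triangle, the tangent to the unit circle at $z_1$ meets the real axis (by symmetry, $z_3$ is on the real axis) at distance $\sec\alpha$ from $O$, giving $z_3=(\sec\alpha,0)$ and median length $|\mu| = \sec\alpha - \cos\alpha = \sin^2\alpha/\cos\alpha$. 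The side $[z_1,z_3]$ is the hypotenuse of a right triangle with legs $|z_0,z_1|=\sin\alpha$ and $|z_0,z_3|=\sin^2\alpha/\cos\alpha$, so $|s| = \tan\alpha$.

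Finally I pass to asymptotics. From $1-\cos\alpha=\eps^2$ and the Taylor expansion $1-\cos\alpha=\alpha^2/2+O(\alpha^4)$ one gets $\alpha = \eps\sqrt 2\,(1+O(\eps^2))$. Substituting into the exact formulas above yields
\[
|b| = 2\sin\alpha \approx 2\alpha \approx 2\eps\sqrt 2,\qquad
\Arc(M) = 2\alpha \approx 2\eps\sqrt 2,
\]
\[
|\mu| = \sin^2\alpha/\cos\alpha \approx \alpha^2 \approx 2\eps^2,\qquad
|s| = \tan\alpha \approx \alpha \approx \eps\sqrt 2,
\]
while $|h|=\eps^2$ is already exact. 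The only step that requires any care is the Taylor inversion of $\cos\alpha = 1-\eps^2$ to obtain $\alpha\approx\eps\sqrt 2$ with a controlled error, and tracking that this error is what produces the ``$\approx$'' in each of the four estimates; everything else is direct substitution.
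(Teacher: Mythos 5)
Your proposal is correct and follows essentially the same route as the paper: identify the half-angle $\alpha$ of the arc via $\cos\alpha = 1-\eps^2$, invert by Taylor expansion to get $\alpha\approx\eps\sqrt2$, locate $z_3$ on the tangent lines to obtain $|\mu|\approx 2\eps^2$, and get $|s|$ from the right triangle with legs $|b|/2$ and $|\mu|$. You are somewhat more explicit (exact closed forms before passing to asymptotics, and an explicit rotation-invariance reduction to $\theta=0$), but the substance is identical.
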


The approximate equalities mean that higher powers of $\eps$ are ignored.

\begin{proof}
Let the points $z,z_0,z_1,z_2,z_3$ be as above. The first claim follows from the definition of the meniscus. Let $x = \Arc(M)/2$.

To prove the second claim, note that $\cos x = |O,z_0|/|O,z_2| = |O,z_0| = 1 - \eps^2$. Since the Taylor series for $\cos x$ is $1 - x^2/2 + \dots$, we have $x \approx \eps\sqrt2$ and $\Arc(M) \approx 2\eps\sqrt2 = O(\eps)$.

Let $\delta = |z,z_3|$. We have $1-\eps^2 = \cos x = |O,z_2|/|O,z_3| = 1/(1+\delta)$, so $1+\delta = 1/(1-\eps^2) = 1 + \eps^2 + \eps^4 + \dots$ and $\delta\approx \eps^2$. Since $\mu = [z_0,z_3]$, we have $|\mu| \approx 2\eps^2$.

By definition of $s$, we have $|s|^2 = |\mu|^2 + (|b|/2)^2 \approx (|b|/2)^2$, so $|s| \approx |b|/2$.
\end{proof}

Recall that the trace distance $\TD(U,V)$ between unitary operators $U,V$ (up to phase factors) of the Hilbert space $\C^2$ is $\sqrt{1-|\Tr(UV^\dagger)|/2}$.

\begin{lemma} \label{lem:meniscus}
Let $U$ be a unitary operator
$\begin{sm}u&-v^*\\ v&u^*\end{sm}$ on $\C^2$,
$R$ be a $z$-rotation
$R_z(\theta) =
\begin{sm}e^{-i\theta/2}&0\\ 0&e^{i\theta/2}\end{sm}$,
and $M$ be the meniscus $M_\eps(\theta)$.
Then
\[
  \TD(U,R)<\eps \iff u_M\in M
\]
where
\begin{align*}
 u_M &= \begin{cases}
        u  & \text{if } \Re(ue^{i\theta/2}) \ge0 \\
        -u & \text{otherwise}
        \end{cases}\\
        \end{align*}
\end{lemma}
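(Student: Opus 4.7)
The plan is to compute $\TD(U,R)$ explicitly, translate the bound $\TD(U,R)<\eps$ into an inequality on $\Re(ue^{i\theta/2})$, and then match that inequality with the definition of the meniscus.

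First I would form the product $UR^\dagger$ and read off its trace. A direct computation shows that the diagonal of $UR^\dagger$ is $ue^{i\theta/2}$ and $u^\ast e^{-i\theta/2}$, so
\[
\Tr(UR^\dagger)\;=\;ue^{i\theta/2}+u^\ast e^{-i\theta/2}\;=\;2\,\Re\!\bigl(ue^{i\theta/2}\bigr).
\]
Plugging into the definition of $\TD$ gives $\TD(U,R)=\sqrt{1-\lvert\Re(ue^{i\theta/2})\rvert}$, so
\[
\TD(U,R)<\eps\ \Longleftrightarrow\ \bigl\lvert\Re(ue^{i\theta/2})\bigr\rvert>1-\eps^2.
\]

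Next I would unpack what $u_M\in M$ means. Since $M_\eps(\theta)$ is $M_0$ rotated by $-\theta/2$, a point $w$ lies in $M_\eps(\theta)$ iff $we^{i\theta/2}\in M_0$, i.e.\ iff $1-\eps^2<\Re(we^{i\theta/2})\le 1$ and $|w|\le 1$. Unitarity of $U$ forces $|u|\le 1$ (and hence $|u_M|\le1$), so the upper bounds are automatic and the membership condition reduces to $\Re(u_Me^{i\theta/2})>1-\eps^2$.

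Finally I would reconcile the absolute value in the trace-distance condition with the sign choice in the definition of $u_M$. Because $\TD$ is defined on unitaries up to a global phase, $U$ and $-U$ yield the same distance to $R$, so replacing $u$ by $-u$ is legitimate. By construction, $u_M$ is whichever of $\pm u$ makes $\Re(u_Me^{i\theta/2})\ge 0$; hence $\Re(u_Me^{i\theta/2})=|\Re(ue^{i\theta/2})|$, and the two inequalities match exactly.

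The computations are all routine; the only step that requires a moment of care is noting that the trace distance is insensitive to the sign ambiguity and hence that reducing to $u_M$ does not change $\TD(U,R)$. With that observation in place, the equivalence is immediate from the chain of biconditionals above.
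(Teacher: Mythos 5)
Your proposal is correct and follows essentially the same route as the paper: compute $\Tr(UR^\dagger)=2\Re(ue^{i\theta/2})$, unwind the definition of $\TD$ into the inequality $|\Re(ue^{i\theta/2})|>1-\eps^2$, use $|u|\le1$ to make the upper bound automatic, and observe that the sign convention for $u_M$ turns the absolute value into $\Re(u_Me^{i\theta/2})$. The remark about phase-insensitivity of $\TD$ is harmless but not needed, since the absolute value in the definition of $\TD$ already absorbs the $\pm u$ ambiguity within the chain of biconditionals.
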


\begin{proof}
Note that $\Tr(UR^\dagger) = (ue^{i\theta/2}) + (ue^{i\theta/2})^* = 2\Re(ue^{i\theta/2})$.
\begin{align*}
\TD(U,R) < \eps
&\iff \sqrt{1-|\Tr(UR^\dagger)|/2} < \eps \\
&\iff 1-|\Tr(UR^\dagger)|/2        < \eps^2 \\
&\iff 2(1-\eps^2)                  < |\Tr(UR^\dagger)| \\
&\iff 1 - \eps^2                < |\Re(u e^{i\theta/2})|\\
&\iff 1 - \eps^2                < \Re(u_M e^{i\theta/2})\\
&\iff 1 - \eps^2 < \Re(u_M e^{i\theta/2}) \le 1 \\
&\iff u_M e^{i\theta/2} \in M_\eps(0) \\
&\iff u_M \in M_\eps(\theta)
\end{align*}
The penultimate equivalence uses the fact that $|u|\le1$ which is true because $u$ occurs in a unitary matrix.
\end{proof}

\begin{corollary}\label{cor:meniscus}
If $u\in M_\eps(\theta)$ and there exists a complex number $v$ satisfying the norm equation $|u|^2 + |v|^2 = 1$, then the matrix
$U = \begin{sm}u&-v^*\\ v&u^*\end{sm}$
is at trace distance $<\eps$ from the $z$-rotation
$R = R_z(\theta)$.
\end{corollary}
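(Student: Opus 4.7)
The plan is to derive the corollary directly from Lemma~\ref{lem:meniscus} by verifying its two hypotheses: that $U$ is a legitimate unitary of the stated form, and that the quantity $u_M$ appearing in the lemma is in fact $u$ itself, so that $u\in M_\eps(\theta)$ coincides with the lemma's conclusion.

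First I would check that $U$ is unitary. The matrix $\begin{sm}u&-v^*\\ v&u^*\end{sm}$ has determinant $|u|^2+|v|^2$ and its columns are manifestly orthonormal once $|u|^2+|v|^2=1$ is imposed, so the hypothesis of the corollary guarantees $U\in\SU(2)$. This makes Lemma~\ref{lem:meniscus} applicable.

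Next I would show that the hypothesis $u\in M_\eps(\theta)$ forces $u_M=u$. Recall that $M_\eps(\theta)$ is the rotation of $M_0=\{w:1-\eps^2<\Re(w)\le 1\}$ by $-\theta/2$ about $O$. Thus $u\in M_\eps(\theta)$ is equivalent to $ue^{i\theta/2}\in M_0$, which in particular gives $\Re(ue^{i\theta/2})>1-\eps^2$. Since $\eps<1$, this quantity is strictly positive, so by the case distinction in the definition of $u_M$ we have $u_M=u$. Consequently $u_M\in M_\eps(\theta)$ holds, and Lemma~\ref{lem:meniscus} yields $\TD(U,R)<\eps$.

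There is no real obstacle here, since everything is a direct unpacking: the corollary is essentially the ``easy'' direction of the equivalence in Lemma~\ref{lem:meniscus}, with the mild additional observation that points of the meniscus automatically lie in the open right half-plane after rotation by $e^{i\theta/2}$, which eliminates the sign ambiguity in $u_M$. The only thing worth being careful about is that this sign observation uses $\eps<1$, which is assumed implicitly throughout the section when menisci are defined.
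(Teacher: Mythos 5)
Your proof is correct and follows essentially the same route as the paper's: observe that $u\in M_\eps(\theta)$ forces $\Re(ue^{i\theta/2})\ge 0$, hence $u_M=u$, and then invoke the ``if'' direction of Lemma~\ref{lem:meniscus}. Your explicit verification that the norm equation makes $U$ unitary is a detail the paper leaves implicit, but it does not change the argument.
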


\begin{proof}
Since $u\in M_\eps(\theta)$, we have $ue^{i\theta/2}\in U_0(\theta)$, $\Re(u e^{i\theta/2}) \ge0$, and $u_M=u$. Now use Lemma~\ref{lem:meniscus}.
\end{proof}

\subsection{Pauli+V}\label{sub:V}

We use the same letter for a linear operator on $\C^2$ and its matrix in the standard basis.

Unitary operators
\[
  V_1 = (I + 2iX)/\sqrt{5},\quad
  V_2 = (I + 2iY)/\sqrt{5},\quad
  V_3 = (I + 2iZ)/\sqrt{5}
\]
generate a group $V$ that is dense in the special unitary group $\SU(2)$ [Ref.~\onlinecite{LPSI}]. Here $I$ is the identity matrix, and $X,Y,Z$ are the single-qubit Pauli matrices. We
have
\[
  V_1^{-1} = (I - 2iX)/\sqrt{5},\quad
  V_2^{-1} = (I - 2iY)/\sqrt{5},\quad
  V_3^{-1} = (I - 2iZ)/\sqrt{5}.
\]
The group $V$ is in fact freely generated by $V_1, V_2, V_3$. This fact appears without proof in [Ref.~\onlinecite{LPSI}]. For completeness, we prove it here; see Corollary~\ref{cor:free} below.

In [Ref.~\onlinecite{BGS}], we explored a slightly larger group $W$ generated by the three operators $V_j$ and three Pauli operators $X, Y, Z$. Because of $Y(I+2iX) = (I-2iX)Y$ and similar relations, every product of operators in $W$ easily reduces to a normal form
\begin{equation}\label{normal}
\begin{aligned}
A_1 A_2 \cdots A_t B\quad \text{where}\quad&
 \sqrt5A_i\in \{1\pm2iX, 1\pm2iY, 1\pm2iZ\},\\
 &\ A_1 A_2 \cdots A_t
  \text{ is reduced, and}\\
 &\ B\in \{\pm I,\pm X, \pm Y, \pm Z\};
\end{aligned}
\end{equation}
in the process the number of V factors does not change but the number of Pauli factors becomes $\le1$. The product $A_1 A_2 \cdots A_t$ is reduced in the sense that no $A_{j+1}$ is the inverse of $A_j$.

By Theorem 1 in [Ref.~\onlinecite{BGS}], an $\SU(2)$ operator is in $W$ if and only if it can be given in the form
\begin{equation}\label{exact}
  \frac{1}{(\sqrt{5})^{t}}
  \left(\begin{matrix}
    u & -v^* \\
    v &  u^*
  \end{matrix}\right)
 \end{equation}
where $u,v$ are Gaussian integers.

The next theorem will relate the exponent $t$ in this formula to the number of factors in the normal form of an element of $W$.  It will also provide similar information about the images in $\SO(3)$ of the Pauli+V matrices.

Recall how matrices in $\SU(2)$ act as rotations on
three-dimensional Euclidean space.  They act by conjugation on the 3-dimensional vector space of traceless Hermitian matrices
\[
xX+yY+zZ=
\begin{pmatrix}
  z&x-iy\\x+iy&-z
\end{pmatrix},
\]
where $X$, $Y$, and $Z$ are the Pauli matrices, and where we regard the real numbers $x,y,z$ as coordinates in $\mathbb R^3$. Furthermore, this conjugation action preserves the Euclidean norm
\[
x^2+y^2+z^2=-\det\begin{pmatrix}
  z&x-iy\\x+iy&-z
\end{pmatrix},
\]
so we get a homomorphism of $\SU(2)$ into the orthogonal group $\text{O}(3)$. Because $\SU(2)$ is connected, the homomorphism actually maps into $\SO(3)$.

Under this homomorphism, the $V$-matrices correspond to rotations by $\arccos(-3/5)$ about the three coordinate axes, namely
\[
R_1=
\begin{pmatrix}
    1&0&0\\0&-\frac35&\frac45\\0&-\frac45&-\frac35
\end{pmatrix},\quad
R_2=
\begin{pmatrix}
  -\frac35&0&-\frac45\\0&1&0\\\frac45&0&-\frac35
\end{pmatrix},\quad
R_3=
\begin{pmatrix}
  -\frac35&\frac45&0\\-\frac45&-\frac35&0\\0&0&1
\end{pmatrix}.
\]

\begin{theorem}     \label{thm:free}
  \begin{enumerate}
  \item Any matrix obtained as a reduced product of $t$ factors taken from $\{{R_1}^{\pm1},{R_2}^{\pm1},{R_3}^{\pm1}\}$ has at least one entry which, when written as a fraction in lowest terms, has denominator $5^t$.
  \item Any matrix obtained as a reduced product of $t$ factors taken from $\{{V_1}^{\pm1},{V_2}^{\pm1},{V_3}^{\pm1}\}$ has the form \eqref{exact}, and it cannot be written in that form with $t$ replaced by a smaller exponent.
  \item Any matrix in the normal form described above, with $t$ factors taken from $\{{V_1}^{\pm1},{V_2}^{\pm1},{V_3}^{\pm1}\}$    followed by one factor from $\{\pm I,\pm X,\pm Y,\pm Z\}$, has the form \eqref{exact}, and it cannot be written in that form with     $t$ replaced by a smaller exponent.
  \end{enumerate}
\end{theorem}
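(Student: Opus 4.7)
The plan is to reduce Part~1 to a calculation modulo~$5$. Writing $5R_j^{\epsilon}=M_j^{\epsilon}$ as an integer matrix and denoting its reduction modulo~$5$ by $\bar M_j^{\epsilon}\in\mathrm{Mat}_3(\mathbb{F}_5)$, direct inspection of the six matrices shows that each $\bar M_j^{\epsilon}$ has rank~$1$, and so factors as $c_j^{\epsilon}(r_j^{\epsilon})^{T}$ for an explicit column and row vector over $\mathbb{F}_5$. For a reduced product $P=R_{j_1}^{\epsilon_1}\cdots R_{j_t}^{\epsilon_t}$, the integer matrix $5^tP$ reduces mod~$5$ to
\[
\bar M_{j_1}^{\epsilon_1}\cdots\bar M_{j_t}^{\epsilon_t}
\;=\;c_{j_1}^{\epsilon_1}\cdot\Bigl(\prod_{i=1}^{t-1}(r_{j_i}^{\epsilon_i}\cdot c_{j_{i+1}}^{\epsilon_{i+1}})\Bigr)\cdot(r_{j_t}^{\epsilon_t})^{T},
\]
which is nonzero iff each scalar inner product $r_{j_i}^{\epsilon_i}\cdot c_{j_{i+1}}^{\epsilon_{i+1}}$ is nonzero in $\mathbb{F}_5$. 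The decisive lemma to prove would be: $r_j^{\epsilon}\cdot c_{j'}^{\epsilon'}\equiv 0\pmod{5}$ precisely when $(j',\epsilon')=(j,-\epsilon)$. This translates ``zero modulo~$5$'' into ``the word is not reduced at that step,'' so reducedness is exactly the condition guaranteeing a nonzero residue, hence an entry of $5^tP$ that is coprime to~$5$.

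Parts~2 and~3 would follow by pushing the question forward under the conjugation homomorphism $\SU(2)\to\SO(3)$. For Part~2, existence of form \eqref{exact} with exponent~$t$ is immediate since each $V_j^{\pm1}$ equals $\tfrac{1}{\sqrt5}$ times a Gaussian-integer matrix. For minimality, observe that any matrix of form \eqref{exact} with exponent~$t'$ conjugates a traceless Hermitian integer matrix to one whose entries have denominators dividing $5^{t'}$, so the $\SO(3)$ projection has denominators bounded by $5^{t'}$. Meanwhile, a reduced $V$-word projects to the same-length $R$-word, which is still reduced (quick checks: $R_i\ne R_j^{\pm1}$ for $i\ne j$, and $R_i\ne R_i^{-1}$), and Part~1 forces some entry of the projection to have denominator exactly~$5^t$, so $t'\ge t$. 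For Part~3, the Pauli factor $B$ maps under $\SU(2)\to\SO(3)$ to a diagonal matrix of $\pm1$'s (identity or a $\pi$-rotation about a coordinate axis), and right-multiplication by such preserves all denominators in the projection, so Part~1 again gives the lower bound; meanwhile $B\in\mathrm{Mat}_2(\Z[i])$ keeps the full product in $\tfrac{1}{\sqrt5^t}\mathrm{Mat}_2(\Z[i])$ and hence in form \eqref{exact}.

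The hard part will be the compatibility lemma for Part~1, namely extracting the vectors $c_j^{\epsilon},r_j^{\epsilon}$ and filling out the $6\times6$ table of $\mathbb{F}_5$ inner products. Symmetry under cyclic relabeling of the three coordinate axes should reduce the verification to just a few cases. One small subtlety I would address explicitly in Part~2 is that the kernel $\{\pm I\}$ of $\SU(2)\to\SO(3)$ does not disturb reducedness, since no sign flip identifies two of the six generators $V_j^{\pm1}$ with each other; everything else is structural bookkeeping.
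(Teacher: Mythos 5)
Your proposal is correct and follows essentially the same route as the paper: the paper also factors out $1/5$, reduces the resulting integer matrices modulo $5$, observes they have rank one, and shows (by induction on word length, phrased via ranges and annihilators rather than your explicit outer-product telescoping $c_{j_1}^{\epsilon_1}\bigl(\prod_i r_{j_i}^{\epsilon_i}\cdot c_{j_{i+1}}^{\epsilon_{i+1}}\bigr)(r_{j_t}^{\epsilon_t})^{T}$) that the product vanishes mod $5$ exactly when the word fails to be reduced; your ``decisive lemma'' on the $6\times6$ inner-product table is precisely the paper's ``matter of inspection,'' and it does check out. Parts~2 and~3 are handled identically in both arguments, by pushing forward to $\SO(3)$ via conjugation on traceless Hermitian matrices and noting that the Pauli factor $B$ and its inverse have Gaussian-integer entries.
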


\begin{proof}
The proof of item (1) in the theorem is rather long and is therefore given in an appendix.  We give here the easy deductions of items~(2) and (3) from item~(1).

To prove (2), consider any product $M=A_1\cdots A_t$ of $t$ factors, each of which is in $\{{V_1}^{\pm1},{V_2}^{\pm1},{V_3}^{\pm1}\}$.  Each factor is thus a matrix of Gaussian integers divided by $\sqrt 5$, so the product $M$ is a matrix of Gaussian integers dvided by $\sqrt{5^t}$.  We need to show that no lesser power of $\sqrt 5 $ can serve as the denominator for $M$.  So suppose, toward a contradiction, that $M$ is a matrix of Gaussian integers divided by $\sqrt{5^r}$ with $r<t$.  Then the the same is true of the conjugate transpose of $M$, which is also the inverse of $M$ because $M$ is unitary.  Thus, when $M$ acts by conjugation on the three-dimensional space of traceless Hermitian matrices, the denominators are (at most) $5^r$, namely a factor $\sqrt{5^r}$ from $M$ and another factor $\sqrt{5^r}$ from $M^{-1}$.  This means that the image of $M$ in $\SO(3)$, which is given by this conjugation action on traceless Hermitian matrices, involves denominators only $5^r$.  But this element of $\SO(3)$ is obtained by multiplying the $R$ matrices corresponding to the $V$ matrices that produced $M$.  So we would have a reduced product of $t$ factors from $\{{R_1}^{\pm1},{R_2}^{\pm1},{R_3}^{\pm1}\}$ with only $5^r$ in the denominator.  This contradicts item~(1).

Finally, for item~(3), we must show that what we just proved about products of the form $A_1\cdots A_t$ is also valid for $A_1\cdots A_tB$ where $B\in\{\pm I,\pm X,\pm Y, \pm Z\}$.  But this is easy, since the entries of $B$ (and of $B^{-1}$, because $B^{-1}=B$) are Gaussian integers, so multiplication by $B$ has no effect on the number of $\sqrt 5$ factors needed in the denominator.
\end{proof}

\begin{corollary}\label{cor:free}
  \begin{enumerate}
    \item The matrices $R_1, R_2,$ and $R_3$ are free generators of the subgroup of $\SO(3)$ that they generate.
    \item The matrices $V_1,V_2,$ and $V_3$ are free generators of the subgroup of $\SU(2)$ that they generate.
  \end{enumerate}
\end{corollary}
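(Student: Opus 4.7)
\medskip

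The plan is to derive both parts of the corollary as direct consequences of Theorem~\ref{thm:free}, which already did the substantive work. Recall that a set of elements $g_1,\dots,g_n$ in a group is a set of free generators exactly when no nontrivial reduced word in $\{g_1^{\pm1},\dots,g_n^{\pm1}\}$ evaluates to the identity. So in each case I would assume, toward a contradiction, that some reduced product of length $t\ge1$ equals the identity, and show that this contradicts the denominator statement of the theorem.

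For part (1), suppose $W=A_1\cdots A_t$ is a reduced product of $t\ge1$ factors from $\{R_1^{\pm1},R_2^{\pm1},R_3^{\pm1}\}$ with $W=I$. Theorem~\ref{thm:free}(1) guarantees that some entry of $W$, when written as a fraction in lowest terms, has denominator $5^t\ge5$. But every entry of $I$ is either $0$ or $1$, all with denominator $1$. This is the contradiction. Hence no reduced word of positive length in the $R_j^{\pm1}$ is trivial, and $R_1,R_2,R_3$ generate a free subgroup of rank $3$.

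For part (2), suppose $W=A_1\cdots A_t$ is a reduced product of $t\ge1$ factors from $\{V_1^{\pm1},V_2^{\pm1},V_3^{\pm1}\}$ with $W=I$. By Theorem~\ref{thm:free}(2), $W$ is of the form \eqref{exact} with exponent $t$ and cannot be written in that form with a strictly smaller exponent. However $I$ itself is already in the form \eqref{exact} with $u=1$, $v=0$ and exponent $0<t$, so $W$ can in fact be written with a smaller exponent, contradicting the minimality assertion. Hence $V_1,V_2,V_3$ are free generators of the group they generate in $\SU(2)$.

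Neither step is really an obstacle, since Theorem~\ref{thm:free} has been set up so that the two cases of the corollary reduce to observations about the identity matrix: its entries are integers, and it already lives at the bottom of the $5$-power hierarchy. The only thing one has to be a little careful about is that the length-$t$ products in the theorem are \emph{reduced}, which is exactly what the definition of freeness requires of the candidate trivial word, so the two notions line up without any further bookkeeping.
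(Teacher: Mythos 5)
Your proof is correct and follows essentially the same route as the paper's: both parts reduce to the observation that a nonempty reduced word would force a denominator $5^t$ (resp.\ a minimal exponent $t\ge1$ in the form \eqref{exact}), which the identity matrix cannot have. You have merely spelled out in detail what the paper states in two sentences.
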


\begin{proof}
  Both parts follow immediately from the corresponding parts of the theorem.  The identity matrix cannot be represented by a nonempty reduced word in the given generators and their inverses, because it has no 5 or $\sqrt 5$ in the denominator.
\end{proof}

\begin{proposition}
The normal forms $A_1\cdots A_tB$ all represent distinct matrices, so
that every element of $W$ has a unique normal form.
\end{proposition}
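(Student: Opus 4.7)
The plan is to take two normal forms $A_1 A_2 \cdots A_t B$ and $A'_1 A'_2 \cdots A'_{t'} B'$ that represent the same matrix and deduce, in order, that $t=t'$, that $A_i=A'_i$ for every $i$, and that $B=B'$.

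First, I would invoke Theorem~\ref{thm:free}(3): each normal form, when written as a matrix of Gaussian integers divided by a power of $\sqrt{5}$ in lowest terms, has denominator exactly $\sqrt{5}^t$ (resp.~$\sqrt{5}^{t'}$). Equality of the two matrices then forces $t=t'$.

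Next, set $P:=A_1\cdots A_t$ and $P':=A'_1\cdots A'_t$. The assumed equality $PB=P'B'$ rearranges to $P^{-1}P' = BB'^{-1}$. The right-hand side lies in the $16$-element Pauli group $\{i^k Q : 0\le k<4,\ Q\in\{I,X,Y,Z\}\}$, so its entries are in $\{0,\pm 1,\pm i\}$ and its denominator is $1$. The left-hand side is a word $A_t^{-1}\cdots A_1^{-1} A'_1\cdots A'_t$ of length $2t$ in the letters $\{V_j^{\pm 1}\}$. Successively cancelling adjacent inverse pairs at the interface between the two halves produces a reduced word $W$ of some length $k\le 2t$, and the cancellations preserve the matrix value because each cancelled pair $X X^{-1}$ equals $I$.

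Then I would apply Theorem~\ref{thm:free}(2) to $W$: if $k\ge 1$, the matrix of $W$ requires denominator exactly $\sqrt{5}^k>1$ in lowest terms, contradicting the fact that $W=BB'^{-1}$ has denominator $1$. Hence $k=0$, so $P^{-1}P'=I$, which immediately gives both $B=B'$ and $P=P'$. Finally, Corollary~\ref{cor:free}(2) tells us that $V_1,V_2,V_3$ freely generate their subgroup, so the matrix equality $P=P'$ between the two reduced words $A_1\cdots A_t$ and $A'_1\cdots A'_t$ forces $A_i=A'_i$ for every $i$.

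The only mildly delicate step is the denominator comparison for the possibly shortened word $W$; once one notices that reduction preserves the matrix and that $BB'^{-1}$ is denominator-free, the two uniqueness assertions of Theorem~\ref{thm:free}(2) and Corollary~\ref{cor:free}(2) do all the remaining work.
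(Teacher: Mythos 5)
Your proof is correct, but it takes a genuinely different route from the paper's. The paper argues by counting: after using Theorem~\ref{thm:free}(3) to separate normal forms of different weights (as you also do), it fixes $t$, counts the matrices whose simplest form is \eqref{exact} via Jacobi's four-square theorem (getting $48\cdot 5^{t-1}$), counts the normal forms of weight $t$ (also $48\cdot 5^{t-1}$), and concludes injectivity from surjectivity plus the matching cardinalities. You instead give a direct injectivity argument: from $A_1\cdots A_tB=A'_1\cdots A'_tB'$ you form the length-$2t$ word $A_t^{-1}\cdots A_1^{-1}A'_1\cdots A'_t$, reduce it at the interface, and use Theorem~\ref{thm:free}(2) to force total cancellation because the Pauli-group element $BB'^{-1}$ needs no $\sqrt5$ in its denominator; freeness (Corollary~\ref{cor:free}(2)), or indeed the total cancellation itself, then matches the letters. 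One small point worth making explicit: to invoke Theorem~\ref{thm:free}(2) you should note that $BB'^{-1}$ really can be written in the form \eqref{exact} with exponent $0$ --- it has determinant $1$ (being equal to a product of determinant-$1$ matrices $V_j^{\pm1}$) and Gaussian-integer entries, so it is one of $\pm I,\pm iX,\pm iY,\pm iZ$, all of shape $\begin{sm}u&-v^*\\ v&u^*\end{sm}$. With that observation your argument is complete, and it is arguably more elementary than the paper's: it needs neither Jacobi's theorem nor the surjectivity input from Theorem~1 of [Ref.~\onlinecite{BGS}]. What the paper's counting proof buys in exchange is a stronger byproduct: the map from weight-$t$ normal forms to matrices with simplest form \eqref{exact} is a bijection, so every such matrix is realized and there are exactly $48\cdot 5^{t-1}$ of them.
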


\begin{proof}
Part~(3) of Theorem~4 immediately implies that two normal forms with distinct lengths represent distinct matrices, for they have different powers of $\sqrt5$ in their simplest forms.  So we need only consider normal forms of one weight $t$ at a time.  Fix $t$ for the rest of this proof.

Combining Part~(3) of Theorem~\ref{thm:free} with Theorem 1 in [Ref.~2], we find that every matrix in $W$  whose simplest form is \eqref{exact} (with our fixed $t$) is represented by a normal form $A_1\cdots A_tB$ (again with our fixed $t$).  To show that this representation is unique, it suffices to show that the number of such matrices equals the number of such normal forms.

To count the relevant matrices \eqref{exact}, write $u=a+bi$ and $v=c+di$, where $a,b,c,$ and $d$ are ordinary integers, and observe that the matrix \eqref{exact} is in $\SU(2)$ if and only if
\[
a^2+b^2+c^2+d^2=|u|^2+|v|^2=5^t.
\]
Thus, the number of matrices in $\SU(2)$ of the form \eqref{exact} is the number of representations of $5^t$ as a sum of four squares of integers.  The number of such matrices for which this is the simplest form is then obtained by subtracting the number of such four-square representations in which all of $a,b,c,$ and $d$ are divisible by 5.

By Jacobi's four-square theorem, every positive odd integer $n$ has $8\sum_{d|n}d$ representations as a sum of four squares of integers. In particular, for $n=5^t$, there are
\[
8(1+5+\cdots+5^t)=8\frac{5^{t+1}-1}{5-1}=2(5^{t+1}-1)
\]
representations of $5^t$ as a sum $a^2+b^2+c^2+d^2$.  As noted above, we must subtract the number of these representations in which all of $a,b,c,$ and $d$ are divisible by 5.  Dividing these four integers by 5, we obtain the representations of $5^{t-2}$ as a sum of four squares, so the number to be subtracted is $2(5^{t-1}-1)$.  Therefore, the number of matrices whose simplest form is \eqref{exact} is
\[
2(5^{t+1}-1)-2(5^{t-1}-1)=2\cdot 5^{t-1}\cdot(25-1)=48\cdot 5^{t-1}.
\]

Now, let us count the number of normal forms $A_1\cdots A_tB$.  We have 6 choices for $A_1$ (namely any of the ${V_i}^{\pm1}$), 5 choices for each subsequent $A_j$ (namely any of the ${V_i}^{\pm1}$ except the inverse of the immediately preceding $A_{j-1}$), and 8 choices for $B$.  That makes $48\cdot 5^{t-1}$ normal forms.  Since this count agrees with the count of matrices above, the proof of the proposition is complete.
\end{proof}

The V-\emph{count} of a $W$-operator $U$ is $t$ if $U$ has a normal form $A_1\cdots A_t B$. Every $W$-circuit implementing $U$ contains at least $t$ $V$-gates.

\subsection{Diophantine approximations}
\label{sub:da}

We presume that the reader is familiar with continued fractions, and we use Khinchin's book [Ref.~\onlinecite{Khinchin}] as our reference on continued fractions.

Every rational number $q$ has a unique continued-fraction representation $[a_0; a_1, a_2, \dots, a_n]$ where all $a_i$ are integers, $a_1,a_2,\dots$ are positive and $a_n > 1$. Every irrational number has a unique continued-fraction representation $[a_0; a_1, a_2, \dots]$ where all $a_i$ are integers, and $a_1,a_2,\dots$ are positive.
If $[a_0; a_1,\dots,a_k]$ is an initial segment of the continued-fraction representation of $\gamma$ then the reduced fraction $p_k/q_k$ represented by $[a_0; a_1,\dots,a_k]$ is the $k^{th}$ approximant (or convergent) of $\gamma$.

By a theorem of Dirichlet [Ref.~\onlinecite[Theorem~25]{Khinchin}], for any real number $\gamma$ and any integer $r\ge1$, there exist relatively prime integers $x$ and $y$ such that
\[
  |\gamma y - x| < 1/r
  \quad\text{and}\quad 1\le y\le r.
\]
The proof of Theorem~25 in [Ref.~\onlinecite{Khinchin}] includes this claim: If $p_k/q_k$ is the approximant of $\gamma$ such that $q_k\le r$ and either $r< q_{k+1}$ or else $\frac{p_k}{q_k}=\gamma$ then $|\gamma q_k - p_k| < 1/r$.

\begin{lemma}\label{lem:rational}
  There is a polynomial-time algorithm that, given a rational $g$ and integer $r\ge1$, computes an approximant $p_k/q_k$ of $g$ such that $|g q_k - p_k| < 1/r$ and $1\le q_k\le r$.
\end{lemma}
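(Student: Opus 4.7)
The plan is to run the classical continued-fraction algorithm on $g$ and cut it off as soon as the denominator would exceed $r$. Concretely: write $g$ as a reduced fraction $m/n$, apply the Euclidean algorithm to $(m,n)$ to generate the partial quotients $a_0,a_1,a_2,\dots$ of the continued-fraction expansion of $g$, and in parallel maintain the convergents by the standard recurrences $p_k = a_k p_{k-1}+p_{k-2}$ and $q_k = a_k q_{k-1}+q_{k-2}$, with the usual initial conditions. Stop as soon as either (i) the expansion terminates, so that the last convergent equals $g$, or (ii) the next step would give $q_{k+1}>r$. Return the last $p_k/q_k$ computed; by construction $1\le q_k\le r$.

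For correctness I would just invoke the claim extracted above from the proof of Khinchin's Theorem~25: whenever $q_k\le r$ and either $r<q_{k+1}$ or $p_k/q_k=g$, one automatically has $|g q_k - p_k|<1/r$. The two stopping conditions of the algorithm are precisely these two cases, so the output inequality is immediate.

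For the polynomial-time bound I would argue separately about the number of iterations and the cost of each. Because $a_k\ge 1$ for $k\ge 1$, the recurrence forces $q_k\ge q_{k-1}+q_{k-2}$, so the $q_k$ dominate the Fibonacci sequence and grow at least geometrically; consequently the loop runs for only $O(\log r + L)$ iterations, where $L$ is the bit length of the input rational $g$. Each iteration consists of a single integer division (to extract the next partial quotient) together with a constant number of multiplications and additions on integers of bit length $O(L+\log r)$, all of which are polynomial. Putting these estimates together gives a polynomial-time algorithm in the bit lengths of $g$ and $r$.

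There is no genuine mathematical obstacle here: the lemma is a packaging of the standard continued-fraction algorithm with the convergent-quality estimate already cited from Khinchin. The only care required is implementation-level, namely cleanly handling the terminating case $p_k/q_k=g$ (including the degeneracy $g\in\Z$) and the boundary case $q_{k+1}=r$, so that the index $k$ at which the algorithm halts is exactly one of those covered by Khinchin's claim.
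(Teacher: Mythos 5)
Your proof is correct and follows essentially the same route as the paper: run the continued-fraction (Euclidean) algorithm on $g$, maintain the convergents by the standard recurrences, stop at a suitable index, and invoke the claim quoted from the proof of Khinchin's Theorem~25; your running-time argument via Fibonacci-type growth of the $q_k$ is the paper's argument, which cites $q_n\ge 2^{(n-1)/2}$ from Khinchin's Theorem~12. The one substantive difference is the stopping rule, and yours is the safer one. The paper halts at the first $j$ with $\gamma_j-a_j<1/r$; this is equivalent to $\gamma_{j+1}>r$ and does yield $|gq_j-p_j|<1/r$, but it does not by itself enforce $q_j\le r$. For instance, with $g=5/8$ and $r=3$ the successive fractional parts are $5/8,\ 3/5,\ 2/3,\ 1/2,\ 0$, all at least $1/3$ until the expansion terminates, so that rule outputs $5/8$ with denominator $8>r$, whereas the convergent singled out by Khinchin's claim is $2/3$ (with $q_3=3\le r<q_4=8$). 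Your rule --- halt when the expansion terminates or when the next denominator would exceed $r$ --- matches the hypotheses of Khinchin's claim exactly, so both conclusions $1\le q_k\le r$ and $|gq_k-p_k|<1/r$ are immediate; the boundary cases you flag ($g\in\Z$ and $q_{k+1}=r$) are indeed the only points needing care, and your conventions handle them.
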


\begin{proof}
The desired algorithm is recursive. Let $\gamma_0 = \gamma$, $a_0 = \lfloor \gamma_0 \rfloor$, $p_0=a_0$, $q_0=1$, and suppose we computed already $\gamma_j,a_j,p_j,q_j$. If $\gamma_j -a_j < 1/r$, stop and output $p_j/q_j$. Otherwise let $\gamma_{j+1} = 1/ (\gamma_j-a_j)$, $a_{j+1} = \lfloor \gamma_{j+1} \rfloor$ and
\begin{align*}
  p_{j+1} &= a_{j+1}p_j + p_{j-1}\\
  q_{j+1} &= a_{j+1}q_j + q_{j-1}
\end{align*}
where $p_{-1}=1$ and $q_{-1}=0$.

To estimate the running time of the algorithm, use the fact that any $q_n\ge 2^{(n-1)/2}$ [Ref.~\onlinecite[Theorem~12]{Khinchin}]. If the output is $p_k/q_k$, we have $2^{(k-1)/2}\le q_k\le r$ and $k \le 1 +2\log_2 r$.
\end{proof}

\begin{proviso}
  Every real number $\gamma$, used as input to an algorithm in the present paper, comes with an oracle that, given the unary notation for an integer $m\ge0$, produces the part $\sum_{n=0}^m d_n/10^n$ of the decimal notation $\sum_{n=0}^\infty d_n/10^n$ for $\gamma$, where every $d_n$ is an integer and $d_1,d_2,\dots$ are in $\{0,1,\dots,9\}$.
\end{proviso}

\begin{lemma}\label{lem:real}
  There is a polynomial-time algorithm that, given a real $\gamma$ and integer $r\ge1$, computes a reduced fraction $x/y$ such that $|\gamma y - x| < 1/r$ and $y\le 2r$.
\end{lemma}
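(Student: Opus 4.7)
The plan is to reduce the real case to Lemma~\ref{lem:rational} by first querying the oracle for a sufficiently precise rational approximation $g$ of $\gamma$, then applying the rational algorithm to $g$, and finally checking via the triangle inequality that the fraction it produces is still a good approximation of $\gamma$.

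Concretely, I would first use the oracle to obtain a rational $g$ with $|\gamma - g| < 1/(4r^2)$. Since the oracle returns the $m$-digit decimal truncation of $\gamma$, choosing $m = \lceil \log_{10}(4r^2)\rceil + 1$ suffices; this requires only $O(\log r)$ oracle queries and gives a $g$ whose bit-size is $O(\log r)$. Next, I would apply the algorithm of Lemma~\ref{lem:rational} to $g$ with the parameter $r' = 2r$, obtaining a continued-fraction approximant $p_k/q_k$ of $g$ satisfying $1 \le q_k \le 2r$ and $|g q_k - p_k| < 1/(2r)$. Since consecutive convergents satisfy $p_{k+1}q_k - p_k q_{k+1} = \pm1$, the approximant is automatically in lowest terms, so setting $x = p_k$ and $y = q_k$ yields a reduced fraction of the required shape.

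The verification step is a one-line triangle inequality:
\[
  |\gamma y - x| \;\le\; |g y - x| + |\gamma - g|\cdot y \;<\; \frac{1}{2r} + \frac{1}{4r^2}\cdot 2r \;=\; \frac{1}{r},
\]
so $x/y$ meets all three conditions, and the total running time is polynomial in $\log r$ plus the cost of the oracle queries, hence polynomial in the input size.

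The only delicate point, and the one that drives the $y \le 2r$ bound (rather than $y \le r$ as in Lemma~\ref{lem:rational}), is choosing the precision of $g$: the Dirichlet-style algorithm applied to $g$ only controls $|gy - x|$, and we must absorb an additive error of $|\gamma - g|\cdot y$ without losing the $1/r$ bound. Doubling the denominator budget from $r$ to $2r$ creates exactly enough slack for $|\gamma - g| = O(1/r^2)$ precision in $g$, which in turn is cheap to obtain from the decimal oracle. No other obstacle arises: the continued-fraction machinery from Lemma~\ref{lem:rational} handles the rational case, and irrationality of $\gamma$ is never used.
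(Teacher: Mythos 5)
Your proposal is correct and follows essentially the same route as the paper: query the decimal oracle for a rational $g$ close to $\gamma$, run the continued-fraction algorithm of Lemma~\ref{lem:rational} on $g$ with denominator budget $2r$, and close with a triangle inequality. Your choice of precision $|\gamma-g|<1/(4r^2)$ is in fact slightly more careful than the paper's $1/(2r^2)$, since it cleanly absorbs the error term $|\gamma-g|\cdot y$ for all $y\le 2r$ rather than only for $y\le r$.
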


\begin{proof}
  Use the oracle companion of $\gamma$ to compute a rational $g$ such that $|\gamma - g| \le\frac1{2r^2}$. Use the algorithm of Lemma~\ref{lem:rational} to compute a fraction $x/y$ such that $1\le y\le 2r$ and $|\gamma y -x| \le\frac1{2r}$. We have
\[
 |\gamma - \frac xy| \le |\gamma - g| + |g-\frac xy|
                     \le \frac1{2r^2} + \frac1{2ry} \le \frac1{ry}
\]
and so $|\gamma y - x| \le \frac1r$.
\end{proof}

\subsection{Sums of squares}
\label{sub:squares}

We recall some well-known facts related to the problem of representing a given (rational) integer $n\ge0$ as a sum of two squares of integers. All variables will range over the integers. For brevity, we say that $n$ is S2S if it is a sum of two squares.

Every prime number of the form $4m+1$ is S2S. Given any such prime $p$, the Rabin-Shallit algorithm finds an S2S representation $x^2+y^2$ of $p$ in expected time $O(\log p)$ [Ref.~\onlinecite[Theorem~11]{RS}].

The S2S property is multiplicative. Indeed, if $m = a^2+b^2 = |a+bi|^2$ and $n = x^2+y^2 = |x+yi|^2$ then $mn = |(a+bi)(x+yi)|^2 = (ax-by)^2 + (ay+bx)^2$.

\begin{lemma}\label{lem:S2S}
  There is an algorithm that, given the representation of any number $n$ as a product of powers of distinct primes, decides whether $n$ is S2S and, if yes, produces an S2S representation $x^2+y^2$ of $n$. The algorithm works in expected polynomial time.
\end{lemma}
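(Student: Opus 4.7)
The plan is to combine the classical characterization of S2S numbers (due to Fermat and generalized by Euler) with the Rabin--Shallit algorithm and the multiplicativity of the S2S property already noted in the excerpt. Recall that $n$ is S2S if and only if every prime $p\equiv 3\pmod4$ occurs to an even power in the prime factorization of $n$. Since the factorization of $n$ is given as input, the decision step is immediate: scan the given list of prime powers $p_i^{e_i}$, and for each $p_i\equiv3\pmod 4$ check whether $e_i$ is even; if some such $e_i$ is odd, return that $n$ is not S2S.

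If the test above is passed, I would construct the representation factor by factor as a Gaussian integer. Initialize $g=1\in\Z[i]$, then process each prime power $p_i^{e_i}$ as follows. If $p_i=2$, multiply $g$ by $(1+i)^{e_i}$, since $2=|1+i|^2$. If $p_i\equiv1\pmod4$, invoke the Rabin--Shallit algorithm to produce integers $x_i,y_i$ with $p_i=x_i^2+y_i^2$, and multiply $g$ by $(x_i+y_iu)^{e_i}$, where $u=i$; this uses $p_i^{e_i}=|x_i+y_ii|^{2e_i}$. If $p_i\equiv3\pmod4$ (necessarily with $e_i=2f_i$ even), multiply $g$ by the rational integer $p_i^{f_i}$, which is trivially a Gaussian integer of squared modulus $p_i^{e_i}$. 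After all primes are processed, we have $|g|^2=n$, so writing $g=x+yi$ yields the desired representation $n=x^2+y^2$.

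For running time, each Rabin--Shallit call on a prime $p_i$ takes expected time polynomial in $\log p_i$, and the number of distinct prime factors is at most $\log_2 n$. The Gaussian-integer exponentiations can be done by repeated squaring; all intermediate Gaussian integers have squared modulus dividing $n$, hence components of bit-length $O(\log n)$, so each arithmetic step costs time polynomial in $\log n$. Altogether the expected running time is polynomial in the size of the input.

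The only nontrivial ingredient is the probabilistic factor-of-a-prime step, and this is handled as a black box by the cited Rabin--Shallit result; there is no real obstacle beyond bookkeeping. The slight subtlety worth flagging is that one must treat the primes $p\equiv3\pmod4$ separately (they contribute only through their even exponents as $p^{f_i}+0\cdot i$), but this fits cleanly into the multiplicative construction and does not affect the complexity analysis.
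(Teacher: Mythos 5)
Your proposal is correct and follows essentially the same route as the paper: decide via the Fermat--Euler criterion on the exponents of primes $\equiv 3\pmod 4$, apply Rabin--Shallit to the primes $\equiv 1\pmod 4$, and combine using the multiplicativity of $|{\cdot}|^2$ on Gaussian integers. Your write-up is merely more systematic (explicit handling of $p=2$ and of exponents via Gaussian-integer powers, plus a running-time count), where the paper argues the construction step by a representative example.
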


\begin{proof}
A number $n$ is S2S if and only if every prime factor $q$ of $n$ of the form $4m + 3$ has an even exponent in the representation of $n$ as a product of powers of distinct primes [Ref.~\onlinecite[Theorem~366]{HW}]. This criterion allows you to decide whether $n$ is S2S or not.

If $n$ is S2S, use the Rabin-Shallit algorithm and the multiplicativity of S2S to find an S2S representation of $n$.
We illustrate this part on an example. Suppose that $n = p_1p_2^3q^2$ where $p_1,p_2$ are primes of the form $4m+1$ and $q$ is a prime of the form $4m+3$. Use the Rabin-Shallit algorithm to represent $p_1,p_2$ as sums of squares. Use the algorithm of the paragraph preceding the lemma, to represent $p_1p_2$ as $c^2+d^2$. Then $n = (cp_2q)^2 + (dp_2q)^2$.
\end{proof}

By the prime number theorem, the number $\pi(n)$ of primes $\le n$ is asymptotically equal to $\frac n{\log n}$ where $\log$ means the natural logarithm. The number of primes of the form $4m+1$ that are $\le n$ is asymptotically equal to $\frac n{2\log n}$ and thus is $\Omega(\frac{n}{\log n})$. It follows that the fraction of S2S numbers $\le n$ is $\Omega(\frac1{\log n})$.

\begin{proposition}\label{pro:R2}
  There is an algorithm that, given a positive integer $n$ of the form $4m+1$ and a positive $\delta>0$, works in time $O((\log n)\log\frac1\delta)$ and returns an S2S representation of $n$ or Nil. If $n$ is prime then,with probability $>1-\delta$,  the algorithm  returns an S2S representation of $n$.
\end{proposition}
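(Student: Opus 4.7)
The plan is to invoke the Rabin--Shallit algorithm as a black box, convert its expected-time guarantee into a high-probability guarantee by truncation and independent repetition, and verify every candidate before reporting it. We do not need to test whether $n$ is actually prime: explicit verification of the equation $x^2+y^2=n$ makes the algorithm sound on every input of the form $4m+1$.

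First I would recall that Rabin--Shallit returns a valid S2S decomposition of a prime $p\equiv1\pmod4$ in expected time $O(\log p)$. By Markov's inequality, there is an absolute constant $C$ such that truncating the subroutine after $C\log n$ steps still yields a correct output with probability at least some fixed $\rho\in(0,1)$ (say $\rho=\tfrac12$) whenever $n$ is prime. Choosing $N=\lceil\log_{1/(1-\rho)}(1/\delta)\rceil=O(\log(1/\delta))$, the algorithm performs $N$ independent truncated runs; after each run it checks whether the returned pair $(x,y)$ satisfies $x^2+y^2=n$ and, if so, returns $(x,y)$. If no run passes the check, it returns Nil.

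Soundness is immediate from the verification step: any non-Nil output is, by construction, a legitimate S2S representation of $n$. When $n$ is prime, the probability that all $N$ truncated runs fail is at most $(1-\rho)^N<\delta$, so a correct representation is returned with probability exceeding $1-\delta$. The time bound sums $N$ truncated invocations costing $O(\log n)$ each, together with $N$ verifications at $O(\log n)$ each in the arithmetic model used for Rabin--Shallit, for a total of $O((\log n)\log(1/\delta))$, matching the proposition.

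The step I expect to be the only real obstacle is precisely this truncation-plus-amplification argument, since it converts the Las Vegas expected-time guarantee of Rabin--Shallit into a Monte Carlo guarantee with the stated failure probability $\delta$. One must be careful that the constant $\rho$ obtained from Markov is independent of $n$ and $\delta$, and that the chosen $N$ respects both the probability and the running-time targets simultaneously; the remaining parts of the argument are routine bookkeeping.
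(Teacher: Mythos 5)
Your proposal is correct and follows essentially the same strategy as the paper: bound the running time of the randomized part of Rabin--Shallit, amplify to failure probability below $\delta$ with $O(\log\frac1\delta)$ independent attempts, and verify any candidate representation so that composite inputs of the form $4m+1$ are handled soundly. The only real difference is cosmetic --- the paper opens up Rabin--Shallit and repeats its internal loop round (which succeeds with probability exactly $\tfrac12$) about $\log_2\frac1\delta$ times, whereas you treat the algorithm as a black box and extract a constant per-run success probability via Markov's inequality, a slightly more generic route to the same bound.
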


\begin{proof}
Let $A$ be the Rabin-Shallit algorithm for finding an S2S representation of a given prime number $n=4m+1$. $A$ works in two stages. At stage~1, it solves the equation $x^2=-1$ in the field $\Z_n$. That solution is used at stage~2 to produce an S2S representation of $n$. Stage~2 is performed in linear time. Stage~1 is a while loop. At the $j^{th}$ round of the loop, $A$ randomly chooses a residue $b_j\mod n$ and computes, in linear time, the greatest common divisor $d_j(x) = x-a_j$ of polynomials $(x-b_j)^2 + 1$ and $x^{2m}-1$. With probability $1/2$, the residue $a_j+b_j$ solves $x^2=-1$; if this happens, call the round successful. The expected number of the rounds is 2. So $A$ works in expected linear time.

Let $A'$ be the modification of $A$ that takes an additional input $\delta>0$ and replaces the while loop with the following for-loop where $J = \lceil \log_2(1/\delta) \rceil$. For $j=0$ to $J$ do:
\begin{enumerate}
\item Perform one round of the $A$'s while loop.
\item If the round is successful, go to stage~2.
\item If $j<J$ then increment $j$ else stop and return Nil.
\end{enumerate}
The probability that $A'$ outputs Nil is $1/2^{J+1}<\delta$. The worst-case running time of $A'$ is $O((\log n)\log\frac1\delta)$.

The desired algorithm $A''$ is the modification of $A'$ where the input integer $n=4m+1$ is not necessarily prime. $A''$ simulates $A'$ on the given $n$ and $\delta$. If $A'$ returns an alleged S2S representation $c^2+d^2$ then $A"$ checks whether the representation is genuine and returns the same S2S representation of $n$ if it is genuine. In all other cases, $A''$ returns Nil.
\end{proof}

\section{Adjusting a meniscus} 
\label{sec:adjust}

We present an algorithm that, given a meniscus $M_\eps(\theta)$ of the unit disk, constructs an operator $\tau\in \SL(2,\Z)$ such that $\tau M$ resides in a vertical band of width $O(\eps^{3/2})$.

Call a complex number $z$  \emph{quasi-rational} if $\Re(z)=0$ or $\Im(z)/\Re(z)$ is rational.
Every nonzero quasi-rational $z$ has a unique \emph{reduced} presentation of the form $\mu(a+bi)$ where $\mu$ is real and positive and where $a,b$ are mutually prime integers; if $\Re(z)\ne0$ then $b/a$ is the reduced form of the fraction $\Im(z)/\Re(z)$.

Observe that a vector $r$ is orthogonal to a given quasi-rational vector $\mu(a+bi)$ if and only if $r$ is quasi-rational of the form $\nu(b-ai)$.

\begin{lemma}\label{lem:tau}
Let $q$ be a nonzero quasi-rational with reduced presentation $\mu(a+bi)$. For any nonzero complex number $r$ orthogonal to $q$, there is $\tau\in \SL(2,\Z)$ such that
\begin{enumerate}
  \item  $\tau q = \mu i$ and
         $|\Im(\tau r)| < |\Re(\tau r)|$,
  \item $|\tau q| = |q|/\sqrt{a^2+b^2}$,
  \item $|\Re(\tau r)| = \sqrt{a^2+b^2}\, |r|$.
\end{enumerate}
\end{lemma}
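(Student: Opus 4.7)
The plan is to work out the matrix entries of $\tau$ directly from the conditions $\tau q=\mu i$ and $\tau\in\SL(2,\Z)$, then exploit the one remaining degree of freedom to secure the inequality in (1). Parts (2) and (3) should fall out as by-products.

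Write $\tau=\begin{sm}\alpha&\beta\\ \gamma&\delta\end{sm}$ and identify $\mathbb C$ with $\mathbb R^2$ so that $\tau$ acts by $(x+iy)\mapsto(\alpha x+\beta y)+i(\gamma x+\delta y)$. The equation $\tau(\mu(a+bi))=\mu i$ splits into
\[
  \alpha a+\beta b=0,\qquad \gamma a+\delta b=1.
\]
Since $\gcd(a,b)=1$, the first equation forces $(\alpha,\beta)=k(b,-a)$ for some integer $k$, and then the determinant condition $\alpha\delta-\beta\gamma=k(\gamma a+\delta b)=k$ together with $\mu>0$ selects $k=1$; so $\alpha=b$, $\beta=-a$. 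The Bezout relation $\gamma a+\delta b=1$ has integer solutions exactly because $\gcd(a,b)=1$; fix any particular $(\gamma_0,\delta_0)$, and note that the general solution is $(\gamma_0+tb,\,\delta_0-ta)$ for $t\in\Z$.

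With $(\alpha,\beta)=(b,-a)$ fixed, writing $r=\nu(b-ai)$ and applying $\tau$ gives
\[
  \Re(\tau r)=\nu(a^2+b^2),\qquad \Im(\tau r)=\nu(\gamma b-\delta a).
\]
Item (3) is then immediate: $|\Re(\tau r)|=|\nu|(a^2+b^2)=\sqrt{a^2+b^2}\cdot|r|$, using $|r|=|\nu|\sqrt{a^2+b^2}$. Item (2) is the trivial calculation $|\tau q|=\mu=|q|/\sqrt{a^2+b^2}$.

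The only real work is item (1). Replacing $(\gamma,\delta)$ by $(\gamma+tb,\delta-ta)$ preserves the Bezout relation and the determinant, and changes the integer $\gamma b-\delta a$ by $t(a^2+b^2)$. Choosing $t$ to reduce $\gamma b-\delta a$ into the interval $\bigl(-\tfrac{a^2+b^2}{2},\tfrac{a^2+b^2}{2}\bigr]$ yields $|\gamma b-\delta a|\le\tfrac{a^2+b^2}{2}$, hence
\[
  |\Im(\tau r)|=|\nu|\,|\gamma b-\delta a|\le \tfrac12\,|\nu|(a^2+b^2)=\tfrac12|\Re(\tau r)|<|\Re(\tau r)|,
\]
where the strict inequality uses $r\ne 0$, hence $\nu\ne 0$. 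This is the step to be careful about, since a naive solution of $\gamma a+\delta b=1$ (for instance the one returned by the extended Euclidean algorithm) need not yield the desired bound; the point is that the residue of $\gamma b-\delta a$ modulo $a^2+b^2$ is the only invariant of the Bezout solution, and it is freely adjustable. No other subtlety is anticipated.
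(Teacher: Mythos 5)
Your proposal is correct and follows essentially the same route as the paper: first row $(b,-a)$ forced by $\tau q=\mu i$ and the determinant, a Bezout second row, and a lower-triangular shear $\begin{sm}1&0\\ t&1\end{sm}$ to force $|\Im(\tau r)|<|\Re(\tau r)|$. The only cosmetic difference is that you prove item (3) by direct computation using $r=\nu(b-ai)$, whereas the paper deduces it from area preservation under a unit-determinant map; both are immediate.
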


\begin{proof}[Proof of the lemma]

\smallskip\noindent
1. Use the extended Euclidean algorithm to find integers $u,v$ with $ua + vb =1$ and let $\tau_0 = \begin{sm}b&-a\\u&v\end{sm}$. Then $\tau_0q = \mu i$ and $\tau_0r$ is some complex number $\alpha+\beta i$. Since $q$ and $r$ are non-collinear, so are $\tau q$ and $\tau r$, and therefore $\alpha\ne0$.
If $|\beta|< |\alpha|$, the desired $\tau = \tau_0$. Otherwise $\tau = \tau'\tau_0$ where $\tau' = \begin{sm}1&0\\k&1\end{sm}$ and
$k = \lceil -\frac \beta\alpha \rceil = -\frac \beta\alpha + \delta$
for some $\delta$ such that $0\le \delta < 1$. We have $\tau q = \tau'(\mu i) = \mu i$, $\Re(\tau r) = \Re(\tau'(\alpha+\beta i)) = \alpha$, and $\Im(\tau r) = \Im(\tau'(\alpha + \beta i)) = k\alpha + \beta = \delta\alpha$, so that indeed $|\Im(\tau r)| < |\Re(\tau r)|$.

\smallskip\noindent
2. By the first part of Claim~1, $|\tau q| = |\mu| = |q|/\sqrt{a^2+b^2}$.

\smallskip\noindent
3. $\tau$ maps the rectangle with sides $q$ and $r$ to a parallelogram of the same area because $\tau\in\SL(2,\Z)$. The parallelogram has a vertical side $\tau q = \mu i$ and side $\tau r$ whose horizontal component is $\Re(\tau r)$, so its area is $|\mu|\cdot \Re(\tau r)$. The original rectangle had area $|q|\cdot|r| = |\mu|\sqrt{a^2+b^2}\,|r|$. Equating the two areas and cancelling $|\mu|$, we get the claim.
\end{proof}

\begin{corollary}
  Let $M$ be a meniscus with a quasi-rational base vector $q$ of reduced form $\mu(a+bi)$ and with handle $h$. There is $\tau\in \SL(2,\Z)$ such that $\tau q$ is vertical, $|\tau q| = |q|/\sqrt{a^2+b^2}$ and $|\Re(\tau h)| = \sqrt{a^2+b^2}\, |h|$.
\end{corollary}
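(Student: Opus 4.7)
The plan is to derive the corollary as a direct application of Lemma~\ref{lem:tau}, with the handle $h$ playing the role of the orthogonal vector $r$. The only non-routine check is that $h$ is indeed orthogonal to the base vector $q$, but this is an immediate consequence of the meniscus geometry set up in Section~\ref{sub:geometry}: the handle $h = z - z_0$ runs along the perpendicular bisector of the base chord $[z_1,z_2]$ (since $z_0$ is the midpoint of that chord and $z = e^{-i\theta/2}$ is the midpoint of the arc, both lying on the radius through the center of symmetry), so $h \perp (z_1 - z_2)$ and hence $h \perp q$.

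With orthogonality in hand, I would first dispose of the degenerate case $h = 0$: this forces $\eps = 0$ by Lemma~\ref{lem:arc}, the meniscus reduces to a single point, and the conclusion is vacuous (any $\tau$ returned by the extended Euclidean step in Lemma~\ref{lem:tau} works for $q$ alone). Assuming $h \ne 0$, Lemma~\ref{lem:tau} applied to the pair $(q,h)$ produces a $\tau \in \SL(2,\Z)$ satisfying items~(1)--(3) of that lemma.

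To finish, I would simply translate the three conclusions of the lemma into the language of the corollary. Item~(1) gives $\tau q = \mu i$, which is vertical, establishing the first claim. Item~(2) is exactly the norm identity $|\tau q| = |q|/\sqrt{a^2+b^2}$. Item~(3), specialized to $r = h$, gives $|\Re(\tau h)| = \sqrt{a^2+b^2}\,|h|$, the third claim.

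There is no real obstacle here; the corollary is essentially a restatement of Lemma~\ref{lem:tau} after noting the geometric fact $h \perp q$. The only point worth explicitly recording in the write-up is the orthogonality, since the rest of the derivation is just bookkeeping.
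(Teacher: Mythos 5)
Your proposal is correct and follows exactly the derivation the paper intends: the paper states this corollary without a written proof, as an immediate application of Lemma~\ref{lem:tau} with $r=h$, the only substantive observation being the orthogonality $h\perp q$ that you correctly justify from the symmetry of the meniscus about the radius through $e^{-i\theta/2}$. (Your degenerate case $h=0$ never actually arises, since $|h|=\eps^2>0$ for every meniscus $M_\eps(\theta)$ with $0<\eps<1$, but including it does no harm.)
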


Originally, to achieve the goal of this section, we intended to show that, for every meniscus $M$ of the unit circle, there exists a slightly bigger meniscus $L\supseteq M$ with a base vector $q$ and a handle $h$ and there exists an operator $\tau\in\SL(2,\Z)$ such that $\tau q$ is vertical and $|\Re(\tau h)| = O(\eps^{3/2})$. The intent ran into difficulties with Diophantine approximations; see \S\ref{sub:da} in this connection. Fortunately there is another way.

\begin{theorem}\label{thm:tau}
  Let $M$ be a meniscus $M_\eps(\theta)$ of the unit disk. There is $\tau\in \SL(2,\Z)$ such that $\tau M$ resides in a vertical band of width $O(\eps^{3/2})$. Moreover, there is a polynomial-time algorithm that, given $\eps$ and $\theta$, constructs the desired $\tau$.
\end{theorem}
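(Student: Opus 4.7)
The plan is to invoke the corollary following Lemma~\ref{lem:tau}, but applied to a quasi-rational \emph{approximation} of the base direction of $M$ rather than to an exactly quasi-rational base; the earlier proposal of enlarging $M$ to a meniscus with quasi-rational base is abandoned, as the remark preceding the theorem warns. The unit base direction of $M_\eps(\theta)$ is $\hat q=ie^{-i\theta/2}=\sin(\theta/2)+i\cos(\theta/2)$, whose slope $\Im/\Re$ equals $\cot(\theta/2)$. The degenerate cases $\sin(\theta/2)=0$ and $\cos(\theta/2)=0$ can be dispatched immediately by taking $\tau=I$ or $\tau=\begin{sm}0&-1\\1&0\end{sm}\in\SL(2,\Z)$; moreover, by composing with the latter $90^\circ$ rotation if necessary, I may assume $|\cot(\theta/2)|\le 1$.

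Set $r:=\lceil\eps^{-1/2}\rceil$. I apply Lemma~\ref{lem:real} to $\gamma=\cot(\theta/2)$ to obtain, in polynomial time, coprime integers $a,c$ with $1\le a\le 2r$ and $|a\cot(\theta/2)-c|<1/r$; the assumption $|\cot(\theta/2)|\le 1$ then forces $|c|=O(r)$, so $N:=\sqrt{a^2+c^2}=O(\eps^{-1/2})$. Then I set $q':=a+ci$ (reduced, $\mu=1$) and apply Lemma~\ref{lem:tau} with $r':=c-ai\perp q'$ to produce $\tau\in\SL(2,\Z)$ with $\tau q'=i$; the whole construction (trigonometry, Lemma~\ref{lem:real}, and the extended Euclidean step inside Lemma~\ref{lem:tau}) runs in polynomial time.

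To bound the horizontal extent of $\tau M$, let $\phi$ be the angle between $\hat q$ and $q'/N$. Multiplying the Diophantine inequality by $|\sin(\theta/2)|\le 1$ gives $|a\cos(\theta/2)-c\sin(\theta/2)|<1/r$, and the two-dimensional cross-product formula identifies the left side as $N|\sin\phi|$; hence $|\sin\phi|<1/(rN)$. Any $u\in M$ can be written $u=z_0+\alpha\hat h+\beta\hat q$ with $|\alpha|\le\eps^2$ and $|\beta|\le\eps\sqrt 2+O(\eps^2)$ by Lemma~\ref{lem:arc}. Expanding $\hat q$ and $\hat h$ in the orthonormal basis $(q'/N,\,r'/N)$ and using Lemma~\ref{lem:tau}(3), which yields $|\Re(\tau(r'/N))|=N$, together with $\tau(q'/N)=i/N$ being purely imaginary, one gets $|\Re(\tau\hat q)|=|\sin\phi|\cdot N<1/r$ and $|\Re(\tau\hat h)|\le N$. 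Therefore
\[
 |\Re(\tau u-\tau z_0)|\le \eps^2\cdot N+\eps\sqrt 2\cdot(1/r)=O(\eps^{3/2}),
\]
so $\tau M$ lies in a vertical band of width $O(\eps^{3/2})$ centered at $\Re(\tau z_0)$.

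The step I expect to demand most care is the translation of the Diophantine bound $|a\cot(\theta/2)-c|<1/r$ into the geometric angle bound $|\sin\phi|=O(1/(rN))$: the identity $N\sin\phi=a\cos(\theta/2)-c\sin(\theta/2)$ is precisely what causes the two contributions $N\cdot\eps^2$ and $\eps/r$ to the horizontal extent to balance at $\eps^{3/2}$, and it is what justifies the calibration $r\sim\eps^{-1/2}$.
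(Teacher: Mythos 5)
Your proof is correct, and its two pillars are the same as the paper's: a Diophantine approximation of the base slope with denominator $O(\eps^{-1/2})$ obtained from Lemma~\ref{lem:real}, followed by the unimodular map of Lemma~\ref{lem:tau} sending the approximating lattice vector $a+ci$ to $i$. Where you genuinely diverge is the final estimate. The paper introduces an auxiliary enlarged meniscus $L\supseteq M$ with base parallel to the quasi-rational $q$, computes its parameters $\delta=O(\sqrt\eps/a)$ and $|h|=\delta^2$, and transfers the bound $|\Re(\tau h)|=\Theta(a\delta^2)$ to the sides of the enclosing triangle of $M$ by a proportionality argument between collinear vectors. You instead decompose an arbitrary $u\in M$ as $z_0+\alpha\hat h+\beta\hat q$ with $|\alpha|\le\eps^2$, $|\beta|\le\eps\sqrt2$, and bound the two unit directions separately: $|\Re(\tau\hat q)|=N|\sin\phi|<1/r=O(\sqrt\eps)$ via the cross-product identity $N\sin\phi=a\cos(\theta/2)-c\sin(\theta/2)$, and $|\Re(\tau\hat h)|\le N=O(\eps^{-1/2})$ via the area-preservation clause of Lemma~\ref{lem:tau}. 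This makes the $\eps^{3/2}$ balance explicit --- base extent $\eps$ times contraction $\sqrt\eps$, handle extent $\eps^2$ times expansion $\eps^{-1/2}$ --- and dispenses with the auxiliary meniscus and enclosing triangle altogether, at the cost of a short linear-algebra computation in the orthonormal frame $(q'/N,\,r'/N)$. Both routes give the same band width and the same polynomial running time; yours is arguably the cleaner exposition of why the calibration $a=O(\eps^{-1/2})$ is the right one.
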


\begin{proof}[Proof of Theorem~\ref{thm:tau}]
Since the isometry
$\begin{sm}0&-1\\ 1&0\end{sm} \in \SL(2,\Z)$
makes horizontal bands vertical, it suffices to prove the version of the theorem where ``vertical'' is replaced with ``vertical or horizontal.''

Let $z_1, z_2$ be the two corner points of $M$ and $r = \alpha + \beta i$ be the base vector $z_2-z_1$. We assume that $\alpha\beta\ne0$; otherwise we have nothing to do. Without loss of generality, $z_1$ is the left of the two corner points of $M$, so that $\alpha>0$.
The enclosing triangle $Z$ of $M$ is formed by points $z_1,z_2$ and the intersection point, call it $z_3$, of the tangent lines to the unit circle at $z_1$ and $z_2$.
Without loss of generality, $|\beta|\le|\alpha|$; otherwise, instead of making $\Base(M)$ nearly vertical, we'll make it nearly horizontal (even though this may look a bit unnatural: if the base is closer to horizontal then we adjust it to become vertical, and if it's closer to vertical then we adjust it to become horizontal.)

We are going to construct an operator $\tau\in \SL(2,\Z)$ such that the horizontal projections of all sides of the triangle $\tau Z$ are $O(\eps^{3/2})$.
Let $\gamma = \beta/\alpha$ and $n = \lceil 1/\sqrt\eps\, \rceil$. Apply the algorithm of Lemma~\ref{lem:real} to construct a reduced fraction $b/a$ such that $|\gamma a - b| < 1/n \le \sqrt\eps$ and $1\le a \le 2n = 2\lceil 1/\sqrt\eps\,\rceil$.

Since $|\gamma a - b| < 1$ and $b$ is an integer, $b$ and $\gamma a$ cannot have the opposite signs. Hence $b=0$ or $b$ has the sign of $\gamma a$ which is also the sign of $\gamma$ and $\beta$. Recall that $\alpha\ge\beta$. We claim that $a \ge|b|$. Indeed,
\[
 |b|-a  = |b| - a|\gamma| - a(1 - |\gamma|)
        \le |b| - a|\gamma|
        = |b - \gamma a| < \sqrt\eps < 1
\]
and thus $|b|\le a$.

Let $q$ be the quasi-rational $\alpha + \frac ba \alpha i = \frac\alpha a (a + bi)$. Recall from \S\ref{sub:geometry} that $|r| = \Theta(\eps)$. We have $r - q = (\beta - \frac ba\alpha) i$ and $ |r - q| = |\frac\alpha a (\gamma a - b)| \le \frac{|r|}a |\gamma a -b| = O(\eps^{3/2}/a)$. 

Consider meniscus $L = M_\delta(\eta)\supseteq M$ such that $\Base(L)$ is parallel to $q$ and touches $\Base(M)$ at $z_1$ or $z_2$. We consider only the case $z_1\in\Base(L)$; the other case is similar.
Let $y_1=z_1$ and $y_2$ be the other end of $\Base(L)$. The enclosing triangle of $L$ is formed by points $y_1,y_2$ and the intersection point, call it $y_3$, of the tangent lines to the unit circle at $y_1$ and $y_2$.

We have
$\Arc(y_1,y_2) = \Arc(z_1,z_2) + \Arc(z_2,y_2) = O(\sqrt\eps /a)$ and, by Lemma~\ref{lem:arc}, $\delta = O(\sqrt\eps /a)$.
Let $h$ be the handle of $L$. By Lemma~\ref{lem:arc}, $|h| = \delta^2$.
By Lemma~\ref{lem:tau}, there is $\tau\in \SL(2,\Z)$ that makes $\Base(L)$ vertical and such that  $|\Re(\tau h)| = \Theta(a\cdot|h|) =\Theta(a\delta^2)$.
Furthermore, a particular $\tau = \tau' \tau_0$ is constructed in the proof of the lemma; we are going to take advantage of that.

The horizontal projection of $[\tau z_1, \tau z_2]$ is $\Re(\tau r)$. We claim that the length of the horizontal projection is $O(\eps^{3/2})$. Since $\tau'$ preserves the real part of any vector, it suffices to show that $|\Re(\tau_0 r)| = O(\eps^{3/2})$. We have
\[
 \Re\left[\begin{sm}b&-a\\u&v\end{sm}
 \begin{sm}\alpha&\beta\end{sm}\right]
 = b\alpha - a\beta = \alpha(b - a\gamma).
\]
So $|\Re(\tau_0 r)| \le |r|\cdot |a\gamma - b| =
O(\eps^{3/2})$. It remains to show that the length of the horizontal projection of $[\tau z_1, \tau z_3]$ is $O(\eps^{3/2})$.

Let $z = z_3 - z_1$ and $y = y_3 - y_1$.
By Lemma~\ref{lem:arc},
$|z| \approx \eps\sqrt2$ and
$|y| \approx \delta\sqrt2$; so
$\frac{|z|} {|y|} \approx \frac\eps\delta$.
Since the vectors $y,z$ are collinear and operator $\tau$ is linear and projection operators are linear as well,
$\frac{|\Re(\tau z)|}{|\Re(\tau y)|} \approx \frac\eps\delta$. Since $\Re(\tau y) = \Re(\tau h)$, we have
\[
 |\Re(\tau z)|
 = \frac\eps\delta \Re(\tau h)
 = \Theta\left( \frac\eps\delta a\delta^2\right)
 = \Theta(a\eps\delta)
 = O(\eps^{3/2}).
\]

It remains to estimate the running time of our algorithm. To this end we need only to estimate the time needed to compute integers $a,b$ and then integers $u,v$ such that $au+bv=1$; the remaining work takes constant time. By Lemma~\ref{lem:real}, integers $a,b$ are computed in polynomial time.  The Euclidean algorithm that computes $u,v$ runs in polynomial time as well.
\end{proof}

\section{Deterministic search}
\label{sec:detsearch}

We explain the deterministic search, how it works and why. We do that essentially on the Pauli+V example. But, to simplify the exposition and minimize distractions, we abstract away some details. It would be easy to abstract away more details with the price of making the exposition a little more involved.

Consider a finite universal basis $\B$ for single-qubit gates. $\B$ may contain gates considered negligible gates; in such a case the depth of a $\B$-circuit is the number of non-negligible gates in the circuit. In the Pauli+V case, the Pauli gates may be considered negligible, because they are relatively cheap to implement and because at most one Pauli gate occurs in the normal form \eqref{normal}.

Assume that $\B$ comes with a partial function $\Level(u)$ that assigns nonnegative integers to some complex numbers and with an equation $\NE_t(u,v)$ such that the following constraints C1-C4$'$ hold.
\begin{enumerate}
\item[C1.] If $u,v$ are of levels $\le t$ and $\NE_t(u,v)$ holds, then the matrix
    $\begin{sm}u&-v^*\\ v&u^*\end{sm}$
    is unitary and exactly realizable by a $\B$ circuit of depth $\le t$.
\end{enumerate}
The ``NE'' in ``$\NE_t(u,v)$'' alludes to the fact that the condition is typically expressed as a norm equation on $u$ and $v$, with parameter $t$. Whether $\NE_t(u,v)$ is a pure norm equation or not, we  will refer to it as the \emph{norm equation}. The reader may be interested how the Pauli+V fits the general scheme, in particular what are the levels and norm equation in the Pauli+V scheme; all these questions will be addressed in the next section.

Recall that we are seeking to approximate a given $z$-rotation
$R_z(\theta) = \begin{sm}e^{-i\theta/2}&0\\ 0&e^{i\theta/2}\end{sm}$
to a given precision $\eps$. Let $M=M_\eps(\theta)$.

If $u\in M$ and $\Level(u)\le t$, call $u$ a \emph{candidate of level $\le t$}. By Corollary~\ref{cor:meniscus}, if $u$ is a candidate of level $\le t$ and $v$ is a complex number of level $\le t$ such that $\NE_t(u,v)$ holds, then the matrix
$U = \begin{sm}u&-v^*\\ v&u^*\end{sm}$
is at a distance $<\eps$ from $R_z(\theta)$. Accordingly, call a candidate $u$ of level $\le t$ a \emph{winning candidate of level $\le t$} (or simply a \emph{winning candidate} if $t$ is clear from the context) if there is a complex number $v$ of level $\le t$ such that $\NE_t(u,v)$ holds.

Let $C_t$ be the set of all candidates of levels $\le t$, and let $W_t$ be the subset of all winning candidates of levels $\le t$. Assume that the following constraints C2--C4 hold. By default, $\log$ means natural logarithm.
\begin{enumerate}
\item[C2.] There is an efficient algorithm that, given $t$, enumerates the candidates of level $t$.
\item[C3.] There exist a real $a>1$ and an integer $t_0$ such that $|C_{t_0}| >1$, and $|C_t|\le1$ for all $t<t_0$, and $|C_t| \ge a^{t-t_0}$ for all $t>t_0$ such that $t-t_0$ is even. Here $t_0$ depends on $\eps$ and $\theta$ while $a$ depends on neither.
\item[C4.] $|W_t| = \Omega(|C_t|/t)$ as $t\to\infty$, uniformly with respect to $\eps$.
\end{enumerate}

The requirement that $t-t_0$ be even in C3 reflects a peculiarity of the Pauli+V case.

\begin{lemma}\label{lem:explore}
$|W_t|\ge k$ for some $t = t_0 + c\,\log t_0$ where $c\ge0$ depends on $k$ and $\eps$ but not $\theta$.
\end{lemma}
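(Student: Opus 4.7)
The plan is to combine the exponential lower bound on $|C_t|$ from C3 with the winning-fraction bound from C4 and solve for $t$. First I would unpack C4 into quantitative form: there exist constants $\kappa>0$ and a threshold $T_0$, both independent of $\eps$ and $\theta$, such that $|W_t|\ge \kappa|C_t|/t$ for every $t\ge T_0$. Plugging in the estimate $|C_t|\ge a^{t-t_0}$ from C3 (valid when $t>t_0$ and $t-t_0$ is even) gives
\[
  |W_t|\ge \frac{\kappa\, a^{t-t_0}}{t}
\]
whenever $t\ge\max(T_0,t_0+2)$ and $t-t_0$ is even.

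Next I would choose $t$ to make this lower bound at least $k$. Writing $t=t_0+s$ with $s$ a positive even integer, the desired inequality is $s\log a\ge\log((t_0+s)k/\kappa)$. The plan is to take $s=2\lceil(c^*/2)\log t_0\rceil$ for any constant $c^*$ strictly greater than $1/\log a$ (depending only on $a$). Then $s\log a\ge c^*\log a\cdot\log t_0$ exceeds $\log t_0+O_{k,\kappa}(1)$, which in turn dominates $\log(t_0+s)+\log(k/\kappa)$ as soon as $t_0$ is large. So for all sufficiently small $\eps$, the lemma holds with $c=c^*+O(1/\log t_0)$, and the resulting $c$ has no dependence on $\theta$.

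The remaining issue is the regime where $t_0$ is small (equivalently, $\eps$ is not very small), in which $t_0+c^*\log t_0$ might fail either to exceed $T_0$ or to satisfy the asymptotic bound above. Because the lemma permits $c$ to depend on $\eps$, I would simply enlarge $c$ in this regime until $t$ clears $T_0$ and until $\kappa a^{t-t_0}/t\ge k$ for some even $t-t_0$; this happens eventually, since the exponential $a^{t-t_0}$ dwarfs the linear divisor $t$. The main ``obstacle'' is really bookkeeping: keeping straight which constants depend on which parameter, and absorbing the evenness constraint of C3 into a bounded additive correction to $c$. No deeper difficulty appears.
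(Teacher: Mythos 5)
Your proposal is correct and follows essentially the same route as the paper's own proof: extract explicit constants from C4, combine with the exponential lower bound $|C_t|\ge a^{t-t_0}$ from C3 for even $t-t_0$, and observe that the exponential in $x$ (where $t=t_0+x\log t_0$) outgrows the linear term $kt/b$, adjusting $c$ to absorb the evenness constraint and the threshold for small $t_0$. No substantive difference from the paper's argument.
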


\begin{proof}
Let $a,t_0$ be as in C3. By C4, there exist a real $b>0$ and an integer $t_1\ge1$, independent of $k$, $\eps$ or $\theta$, such that $|W_t| \ge b|C_t|/t$ for all $t\ge t_1$.

Let $t\ge t_0$, $t\ge t_1$ and $t-t_0$ be even.  Then $|W_t|\ge b|C_t|/t \ge ba^{t-t_0}/t$, and so $|W_t|\ge k$ if $a^{t-t_0} \ge kt/b$.
If $t = t_0 + x\,\log t_0$ then
\begin{equation}\label{ktest}
 |W_t|\ge k \text{ if }
 (a^{\log t_0})^x \ge \frac{kt_0}b + \frac{k\log t_0}b x.
\end{equation}
The exponential function $(a^{\log t_0})^x$ of $x$ quickly outgrows the linear function $\frac{kt_0}b + \frac{k\log t_0}b x$ of $x$. The desired $t$ is $t_0 + c\,\log t_0$ where $c$ is the least nonnegative real such that $t_0 + c\log t_0\ge t_1$, $x\log t_0$ is an even integer and the premise of the implication~\eqref{ktest} holds.
\end{proof}

Our goal is to (efficiently) find a winning candidate, preferably of low level.  Our ability to do this depends on our ability to tell whether a given candidate is winning, and in this connection we consider two scenarios.

\smallskip\noindent\emph{Scenario~1}\quad
We have a deterministic decision procedure that, given an integer $t\ge0$ and a candidate $u\in C_t$, decides in polynomial time whether $u\in W_t$. Then the following obvious deterministic search finds a winning candidate of the minimal possible level. Explore candidates of levels $0$, then candidates of level $1$, etc. until a winning candidate of some level is found. The efficiency of such a deterministic search crucially depends on the efficiency of the decision procedure.

\smallskip\noindent\emph{Scenario~2}\quad
We have a randomizing procedure that, given an integer $t\ge0$ and a candidate $u\in C_t$, decides in expected polynomial time whether $u$ belongs to a subset $W'_t$ of $W_t$ subject to the following constraint.

\begin{enumerate}
\item[C4$'$] $|W'_t| = \Omega(|C_t|/t)$ as $t\to\infty$, uniformly with respect to $\eps$.
\end{enumerate}

Then the following randomizing search finds a candidate in $W'_t$ of the minimal possible level. Explore candidates of levels $0$, then candidates of level $1$, etc. until, for some $t$, a member of $W'_t$ is found.

\begin{lemma}\label{lem:explore'}
$|W'_t|\ge k$ for some $t = t_0 + c\,\log t_0$ where $c\ge0$ depends on $k$ and $\eps$ but not $\theta$.
\end{lemma}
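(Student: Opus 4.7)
The plan is to mimic the proof of Lemma~\ref{lem:explore} essentially verbatim, since that proof only uses the lower bound from C4 together with the growth bound from C3, and the hypothesis C4$'$ supplies exactly the analogous lower bound for $W'_t$. I would first apply C3 to fix $a>1$ and an integer $t_0$ (depending on $\eps$ and $\theta$) such that $|C_t|\ge a^{t-t_0}$ for every $t>t_0$ with $t-t_0$ even. Next I would invoke C4$'$ to get constants $b>0$ and $t_1\ge 1$, independent of $k$, $\eps$, $\theta$, such that $|W'_t|\ge b\,|C_t|/t$ for all $t\ge t_1$.

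Combining these two facts, for any $t\ge\max(t_0,t_1)$ with $t-t_0$ even I get $|W'_t|\ge b\,a^{t-t_0}/t$, so it suffices to find such a $t$ with $a^{t-t_0}\ge kt/b$. Substituting $t=t_0+x\log t_0$ turns this into
\[
(a^{\log t_0})^x \;\ge\; \frac{kt_0}{b} + \frac{k\log t_0}{b}\,x,
\]
which is the same key inequality \eqref{ktest} appearing in the proof of Lemma~\ref{lem:explore}. The left side grows exponentially in $x$ while the right side is linear in $x$ (with slope and intercept depending only on $k$, $b$, and $t_0$), so the inequality holds for all sufficiently large $x$. I would then define $c$ to be the least nonnegative real such that simultaneously $t_0+c\log t_0\ge t_1$, $c\log t_0$ is an even integer, and the displayed inequality holds at $x=c$; this $c$ depends on $k$ and $\eps$ (through $t_0$, $a$, $b$, $t_1$) but not on $\theta$, and taking $t=t_0+c\log t_0$ yields $|W'_t|\ge k$ as required.

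There is no genuine obstacle here: the argument is a word-for-word repetition of the proof of Lemma~\ref{lem:explore}, with $W_t$ replaced by $W'_t$ and C4 replaced by C4$'$. The only thing worth flagging is the evenness requirement on $t-t_0$ inherited from C3, which is handled by a harmless rounding-up of $c$ exactly as in the previous proof. Accordingly I would simply write the proof as a one-line invocation of Lemma~\ref{lem:explore} applied to the family $\{W'_t\}$ in place of $\{W_t\}$, since the earlier argument never used any property of $W_t$ beyond the conclusion of C4.
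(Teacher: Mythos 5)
Your proposal is correct and matches the paper's proof exactly: the paper's own argument is precisely to repeat the proof of Lemma~\ref{lem:explore} with C4 replaced by C4$'$ (and $W_t$ by $W'_t$). Nothing further is needed.
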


\begin{proof}
Just replace the reference to C4 with a reference to C4$'$ in the proof of Lemma~\ref{lem:explore}.
\end{proof}

\section{Optimal Pauli+V circuits}

\subsection{Pauli+V candidates}
\label{sub:candidates}

We specialize \S\ref{sec:detsearch} to the Pauli+V case.
All the assumptions made in \S\ref{sec:detsearch} need to be justified.

Define a complex number $u$ to be of level $\le t$ if $\sqrt{5^s}u$ is a Gaussian integer for some nonnegative integer $s\le t$. The norm equation $\NE_t(u,v)$ has a particularly simple form in the Pauli+V case: $|u|^2 + |v^2| = 1$. If $\sqrt{5^t} u, \sqrt{5^t} v$ are Gaussian integers $a+bi, c+di$ respectively then the norm equation becomes $a^2 + b^2 + c^2 + d^2 = 5^t$. By Theorem~1 in [Ref.~\onlinecite{BGS}], mentioned in \S~\ref{sub:V}, constraint C1 holds.

Toward verifying C2, construct a linear transformation $\tau$ as in Theorem~\ref{thm:tau}. The transformation $\tau$ maps straight lines into straight lines and ellipses into ellipses; it preserves areas and  convexity.
The elliptical meniscus $\tau M$ is enclosed in a vertical band of width $O(\eps^{3/2})$.

The inflated elliptical meniscus $\sqrt{5^t} \tau M$ is enclosed in a vertical band that projects onto the real segment $[l_t, r_t]$ with $r_t-l_t = O(\sqrt{5^t} \eps^{3/2})$.
$\sqrt{5^t} \tau M$ is bounded by segments \begin{align*}
  & a + i f(a):\ l_t \le a \le r_t\\
  & a + i g(a):\ l_t \le a \le r_t
\end{align*}
of a straight line and of an ellipse respectively. To simplify the exposition, we consider only the case where the straight line segment is above the ellipsis segment.

Each Gaussian integer in $\sqrt{5^t} \tau M$ belongs to a vertical segment $[n+ig(n),n+if(n)]$ where $n$ is an integer in the segment $[l_t,r_t]$; see Figure \ref{fig:menuscus:enumeration}. This allows us to enumerate efficiently all Gaussian integers $a+bi$ in $\sqrt{5^t} \tau M$ and thus to enumerate efficiently all candidates
$\frac {\tau^{-1}(a+bi)} {\sqrt{5^t}}$
of levels $\le t$.
Constraint C2 holds.

\begin{figure}[t]
\includegraphics[width=3.5in]{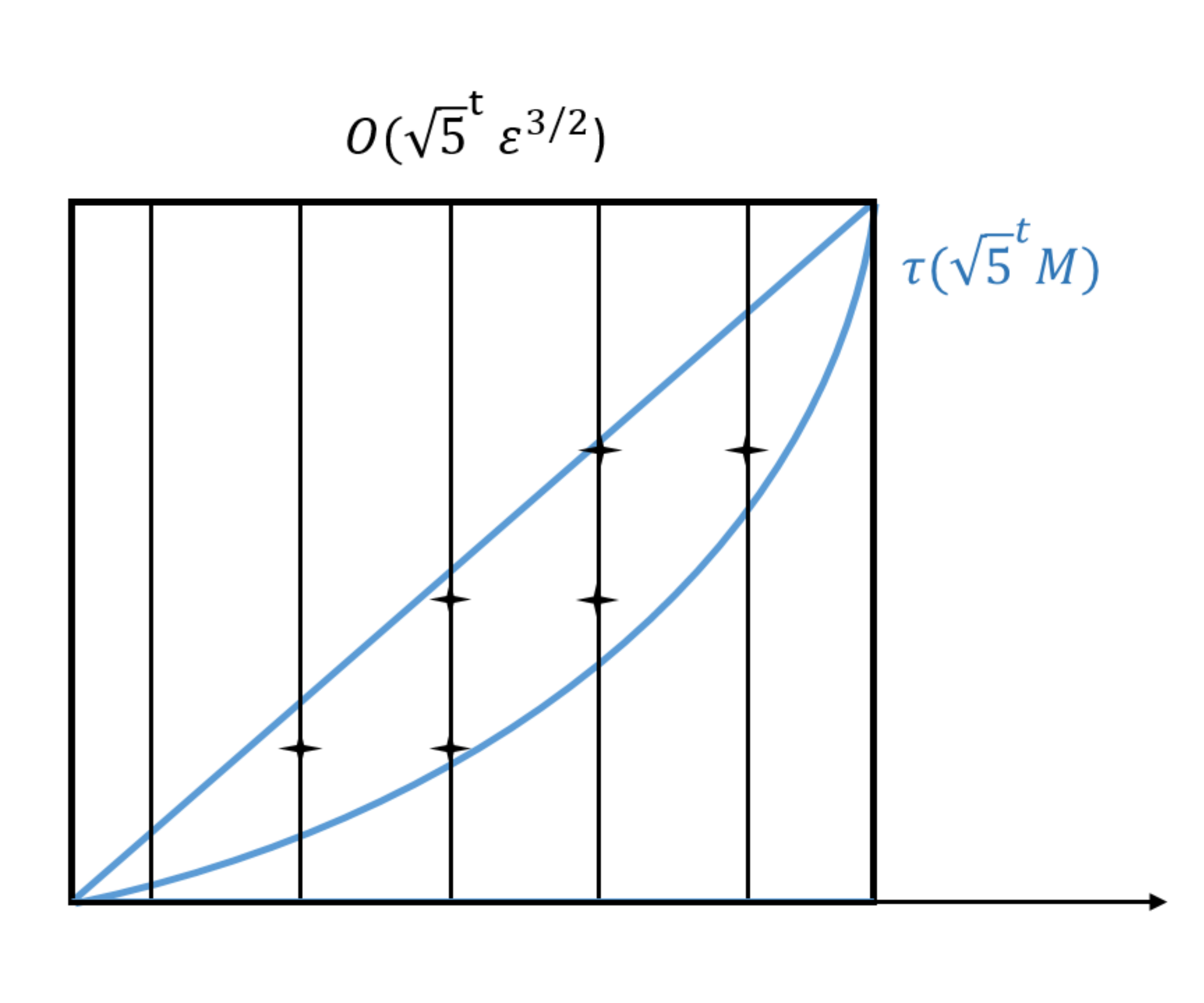} 

\caption{\label{fig:menuscus:enumeration} Enumeration of Gaussian integers in $\sqrt{5^t}\tau M$. }
\end{figure}

Constraint C3 follows from the following claim based on an observation in [Ref.~\onlinecite{RoSelinger}].

\begin{claim} \label{clm:exp_growth}
If $|C_t| \ge2$ then for any integer $k \geq 0$ we have  $|C_{t+2k}| \ge 1+5^k$.
\end{claim}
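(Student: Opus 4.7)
The plan is to use a straightforward chord-interpolation argument based on the convexity of the meniscus and the fact that scaling the denominator by $5^k$ gives a lattice that is $5^k$ times denser. Given distinct $u_1, u_2 \in C_t$, the goal is to exhibit $5^k+1$ distinct points on the chord $[u_1,u_2]$ that all lie in $M=M_\eps(\theta)$ and all have level at most $t+2k$.

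First I would reduce to a common denominator. Each $u_i$ may be written as $u_i = w_i/\sqrt{5^{s_i}}$ with $w_i\in\Z[i]$ and $s_i\le t$; setting $s=\max(s_1,s_2)$ and noting that multiplying by $5$ preserves the property $\sqrt{5^s}u_i \in \Z[i]$, one arranges that $w'_i := \sqrt{5^s}\,u_i \in \Z[i]$ for both $i=1,2$. (The only mild technical point is parity: if the minimum levels $s_1,s_2$ differ in parity, one of them fails to clear the $\sqrt5$ factor. In the Pauli+V setting the parity of the minimum level is an intrinsic invariant of the nonzero complex number $u$, so one either argues that two same-parity candidates can be chosen whenever $|C_t|\ge 2$, or one adjusts $s$ by one to restore compatibility at the cost of a slightly larger---but still $\le t+2k$---exponent.)

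Next, for $j=0,1,\dots,5^k$, I would define
\[
 u^{(j)} := \frac{j}{5^k}\,u_1 + \left(1-\frac{j}{5^k}\right) u_2.
\]
Because the meniscus $M_\eps(\theta)$ is the intersection of the closed unit disk with a closed half-plane, it is convex, so the entire segment $[u_1,u_2]$ lies in $M$ and each $u^{(j)}\in M$. Since $u_1\ne u_2$, the points $u^{(j)}$ are pairwise distinct, producing $5^k+1$ distinct points in $M$.

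Finally I would verify the level bound: a direct computation gives
\[
 \sqrt{5^{s+2k}}\,u^{(j)} \;=\; 5^k\sqrt{5^s}\,u^{(j)} \;=\; j\,w'_1 + (5^k-j)\,w'_2 \;\in\; \Z[i],
\]
so $u^{(j)}$ has level $\le s+2k\le t+2k$, placing it in $C_{t+2k}$. Combining with the distinctness of the $u^{(j)}$ yields $|C_{t+2k}|\ge 5^k+1$, which is the claim. The only nontrivial step in the whole argument is the parity-matching when reducing to a common denominator; once that is dealt with, the remainder is a one-line convexity remark plus a one-line scaling identity.
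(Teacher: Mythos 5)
Your core argument is exactly the paper's: it too takes two candidates written over a common denominator $\sqrt{5^t}$ and inserts the $5^k-1$ strict convex combinations with coefficients $j/5^k$, which lie in $M$ by convexity and acquire level $\le t+2k$ by precisely your scaling identity. The only place you go beyond the paper is the parity discussion, and there your second proposed fix is wrong. Since $\sqrt5\notin\mathbb{Q}(i)$, the parity of the minimal exponent $s$ with $\sqrt{5^s}u\in\Z[i]$ is an invariant of a nonzero $u$; hence two candidates whose minimal levels have opposite parities admit \emph{no} common denominator of the form $\sqrt{5^s}$ at all --- adjusting $s$ by one merely swaps which of the two fails --- and the strict convex combinations of such a pair have no finite level whatsoever: after multiplying by any $\sqrt{5^m}$ they take the form $A+B\sqrt5$ with $A,B\in\mathbb{Q}(i)$ and $B\neq0$, which cannot be a Gaussian integer. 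So the only viable repair is your first option, exhibiting two candidates over a common denominator; but your assertion that this is possible ``whenever $|C_t|\ge2$'' is not justified as stated, since $C_t$ could a priori consist of exactly two opposite-parity points. The paper sidesteps the issue by simply positing candidates $z_1/\sqrt{5^t}$, $z_2/\sqrt{5^t}$ with $z_1,z_2\in\Z[i]$, and in the application this is harmless because $t_0$ is produced (Lemma~\ref{lem:t0}) by exhibiting two Gaussian integers in $\sqrt{5^{t_0}}M$, i.e., two candidates already sharing the denominator $\sqrt{5^{t_0}}$. If you keep the parity caveat, state the claim for such a pair rather than for bare $|C_t|\ge2$.
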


\begin{proof}
Suppose that $z_1,z_2$ are Gaussian integers and candidates
$\frac{z_1}{\sqrt{5^t}}, \frac{z_2}{\sqrt{5^t}}$
of levels $\le t$ belong to $M$. For any $k$, these candidates are also of levels $\le t+2k$. Since $M$ is convex, it contains also the following intermediate candidates of levels $\le t+2k$:
$$
 \frac{5^t-j}{5^k} z_1 + \frac j{5^k} z_2\quad
 \text{where } j=1,...,5^k-1.
$$
\end{proof}

Note that a candidate $u = \sqrt{5^{-t}}(a+bi)$ of level $\le t$ is a winning candidate of level $\le t$ if and only if $5^t-a^2-b^2$ is a sum $c^2+d^2$ of two squares. Here $a,b,c,d$ are all integers,
Constraint C4 follows from the following number-theoretic conjecture of [Ref.~\onlinecite{BGS}].
\begin{conjecture}\label{con:1}
Let $A$ be the area of the meniscus $\sqrt{5^t}M_\eps(\theta)$ of the disk of complex numbers of norm $\le\sqrt{5^t}$,
and let $S$ be the set of Gaussian integers $a+bi$ in $\sqrt{5^t}M_\eps(\theta)$ such that $5^t-a^2-b^2$ is a sum of two squares (of rational integers). Then $|S| = \Omega(A/t)$.
\end{conjecture}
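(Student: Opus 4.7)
The plan is to reduce the conjecture to a density statement for sums-of-two-squares in a structured family of integers and then invoke (or assume) a Landau-Ramanujan-type estimate. I will split the argument into (i) a geometric/lattice step that controls how many Gaussian integer candidates sit in the scaled meniscus, and (ii) an arithmetic step that controls how often $n(a,b):=5^t-a^2-b^2$ is a sum of two squares.

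For the geometric step I would apply the linear map $\tau\in\SL(2,\Z)$ from Theorem~\ref{thm:tau} to the scaled meniscus $\sqrt{5^t}M_\eps(\theta)$. Because $\tau$ is area-preserving and carries the meniscus into a thin vertical strip of width $O(\sqrt{5^t}\eps^{3/2})$, the lattice-point count reduces to counting Gaussian integers inside a long, thin, convex region. A standard lattice-point estimate (e.g.\ a slicing argument along the integer columns, as in the enumeration picture of Figure~\ref{fig:menuscus:enumeration}) together with Claim~\ref{clm:exp_growth} gives that the number of Gaussian integer candidates of level $\le t$ in $M_\eps(\theta)$ is $\Theta(A)$ as soon as $A$ is larger than some absolute constant. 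Call the set of such lattice points $\Lambda_t$, so $|\Lambda_t|=\Theta(A)$.

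For the arithmetic step the task becomes: prove that for a positive-density subset of $z=a+bi\in\Lambda_t$, the integer $n(a,b)=5^t-a^2-b^2$ is a sum of two squares. By Fermat's two-squares criterion, this happens exactly when every prime $p\equiv 3\pmod 4$ divides $n(a,b)$ to an even power. Landau's theorem says that, among all integers in $[1,5^t]$, the density of S2S numbers is $\Theta(1/\sqrt{t})$; so heuristically one expects $|S|=\Theta(A/\sqrt t)$, which is strictly stronger than the $\Omega(A/t)$ target. The natural plan is therefore to show the S2S values $n(a,b)$ behave, on average over $\Lambda_t$, no worse than a random integer of size $5^t$: a first-moment bound (count pairs $(z,n)$ with $n=n(z)$ S2S using the Landau-Selberg-Delange machinery for the indicator of S2S) combined with a second-moment/large-sieve bound to rule out abnormal concentration should give a positive proportion of good $z$.

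The main obstacle, and the reason this is stated as a conjecture rather than a theorem, is exactly the second-moment control: one must show that the values $n(a,b)$ with $z=a+bi$ ranging over a thin convex lattice subset of the disk are not pathologically biased toward being divisible to odd powers by small primes $\equiv 3\pmod 4$. This is essentially a two-dimensional analogue of the problem of counting S2S values of a binary quadratic form on short intervals; known unconditional results (via the half-dimensional sieve and Iwaniec's work on $a^2+b^4$) handle one-parameter families, but the two-parameter family $5^t-a^2-b^2$ restricted to a meniscus of area $A\ll 5^t$ is thinner and lies outside the range where current sieve methods close the gap between upper and lower bounds. Any honest proof would likely need either (a) a strong equidistribution input (for instance, an analytic estimate for $\sum_{z\in\Lambda_t}r_2(5^t-|z|^2)$ via spectral theory/Kloosterman sums, in the spirit of Duke-Friedlander-Iwaniec), or (b) a weaker, averaged statement that suffices because the loss of $1/t$ (as opposed to $1/\sqrt t$) in the conjecture leaves significant slack. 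I would first attempt route (b): prove an unconditional $\Omega(A/t)$ lower bound by exhibiting, for each $z\in\Lambda_t$, a short family of small translates $z'$ within $\Lambda_t$ such that at least one value $n(z')$ avoids the ``bad'' primes dividing a fixed auxiliary modulus, then conclude by an inclusion-exclusion over these small primes.
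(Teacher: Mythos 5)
You should first note that the paper does not prove this statement at all: it is explicitly labeled a \emph{conjecture}, imported from [Ref.~2], and the authors justify it only by a heuristic (the values $5^t-a^2-b^2$ should be uncorrelated with primality/S2S-ness, so they should be S2S at least as often as the primes $\equiv 1 \pmod 4$ predict, i.e.\ with density $\Omega(1/t)$) together with numerical experimentation and consultation with number theorists. So there is no proof in the paper to compare your proposal against, and your proposal does not supply one either --- which, to your credit, you say plainly: the second-moment/equidistribution control for the two-parameter family $n(a,b)=5^t-a^2-b^2$ restricted to a thin meniscus is exactly the open step, and neither route (a) nor route (b) in your plan is carried out. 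Your Landau--Ramanujan framing is consistent with, and slightly sharper than, the paper's own heuristic: the paper only uses the elementary fact that primes $\equiv 1\pmod 4$ have density $\Omega(1/\log n)$ among integers up to $n$ (whence the $1/t$ in the conjecture, and whence the companion Conjecture~2, which is the prime-value version you gesture at), whereas a truly random model would give $\Theta(A/\sqrt t)$.

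One concrete caution on your geometric step: the claim $|\Lambda_t|=\Theta(A)$ is not automatic for this region. The scaled meniscus is extremely eccentric --- at the critical level $t_0\approx 3\log_5(1/\eps)$ its area is $\Theta(1)$ while its perimeter is $\Theta(\sqrt{5^t}\,\eps)=\Theta(\eps^{-1/2})$ --- so the standard ``area plus $O(\text{perimeter})$'' bound says nothing, and this is precisely why the paper establishes constraint C3 via the convexity/interpolation argument of Claim~11 rather than a lattice-point count, invoking Huxley's asymptotic only for fixed $\eps$ as $t\to\infty$ in Lemma~14. If you want $|\Lambda_t|=\Theta(A)$ uniformly in $\eps$, you need an argument of that interpolation type, not a generic convex-body estimate. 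None of this rescues the arithmetic step, so the statement remains, as in the paper, an unproven hypothesis on which the synthesis algorithm's guarantees are conditioned.
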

Actually, instead of $\Omega$, one finds $\Theta$ in [Ref.~\onlinecite{BGS}], but only the lower bound is used there and also here.

Define $W'_t$ to be the set of members $u = \sqrt{5^{-t}}(a+bi)$ of $W_t$ such that $5^t-a^2-b^2$ is a prime of the form $4m+1$. Every such prime is a sum of two squares; see \S\ref{sub:squares}.

To justify C4$'$, we need an additional number-theoretic conjecture.
\begin{conjecture}\label{con:2}
Let $A,S$ be as in Conjecture~\ref{con:1}, and let $S'$ consists of numbers $a+bi\in S$ such that $5^t-a^2-b^2$ is a prime of the form $4m+1$. Then $|S'|=\Omega(A/t)$.
\end{conjecture}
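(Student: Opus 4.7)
My plan is to motivate Conjecture~\ref{con:2} by a heuristic density argument; a rigorous proof appears to lie beyond current analytic number theory, and indeed Conjecture~\ref{con:1} is already presented in a comparable conjectural spirit. First I would restrict attention to Gaussian integers $a+bi$ in $\sqrt{5^t}M_\eps(\theta)$ for which both $a$ and $b$ are even. These form a positive fraction of all Gaussian integers in the region, and since $5^t\equiv 1\pmod 4$, the restriction forces $n:=5^t-a^2-b^2\equiv 1\pmod 4$. Hence any prime value of $n$ is automatically of the desired form $4m+1$, and it suffices to produce $\Omega(A/t)$ Gaussian integers in this restricted set for which $n$ is prime.

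Next I would quantify the range of $n$. For $a+bi$ in the scaled meniscus we have $|a+bi|^2\in(5^t(1-\eps^2)^2,5^t]$, so $n$ lies in the interval $(0,O(5^t\eps^2)]$. Heuristically the $\Theta(A)$ values of $n$ so obtained behave like a random sample of integers $\equiv 1\pmod 4$ in this window, and Dirichlet's theorem together with the prime number theorem predicts density $\Theta(1/\log 5^t)=\Theta(1/t)$ for primes of the form $4m+1$ among such integers. Multiplying the densities recovers the claimed lower bound $|S'|=\Omega(A/t)$, in parallel with Conjecture~\ref{con:1}.

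To convert this heuristic into a proof, one would apply a sieve to the two-variable polynomial $P(a,b)=5^t-(a^2+b^2)$ evaluated over the integer points of the scaled meniscus, first verifying the local densities modulo small primes (giving a positive singular series of Bateman--Horn type), and then attempting an asymptotic or weighted sieve, possibly combined with Conjecture~\ref{con:1} so that one only needs to sieve within the subset where $n$ is already known to be a sum of two squares. The main obstacle is precisely this sieve step: establishing that a genuine two-variable polynomial takes the predicted number of prime values on a prescribed convex region is a notoriously open problem, and the shifted norm form $5^t-(a^2+b^2)$, with its moving target $5^t$, lies outside the reach of the Friedlander--Iwaniec and Heath--Brown techniques that handle special cases. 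For this reason the statement is presented as a conjecture, and only heuristic support is offered here.
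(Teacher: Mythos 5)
This statement is a conjecture; the paper offers no proof of it, only the informal remark that primes should be distributed "randomly" relative to the set $S$, together with expert opinion and experimental support. Your proposal takes essentially the same stance: your heuristic (restricting to $a,b$ even so that $n=5^t-a^2-b^2\equiv 1\pmod 4$, noting $n=O(5^t\eps^2)$ so that the expected prime density is $\Theta(1/t)$, and identifying the two-variable sieve step as the genuine obstruction) is simply a more explicit and quantitatively careful version of the paper's one-line intuition, and it correctly treats the statement as unprovable by current methods rather than claiming a proof.
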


Both conjectures were found credible by the experts in analytic number theory that we consulted. The conjectures also are supported by experimentation.
The intuition behind the conjectures is that there is no correlation between sets $S,S'$ one the one side and prime numbers on the other. As far as sets $S$ and $S'$ are concerned, the distribution of prime numbers could be random.
``It is evident that the primes are randomly distributed but, unfortunately, we don't know what `random' means'' quipped the number theorist Bob Vaughan in 1990 [Ref.~\onlinecite{MSE}].

\subsection{The first circuit-synthesis algorithm}
\label{sub:syn1}

Our first circuit-synthesis algorithm is presented in Figure \ref{fig:syn1} where $P$ is a procedure that takes a positive integer $n$ as input and returns a complex number $c+di$ with $c^2+d^2 =n$ or returns Nil. We give 2 variants of the synthesis algorithm that correspond to the two scenarios of \S\ref{sec:detsearch}. The two variants differ only in their versions $P_1, P_2$ of the procedure $P$.

\begin{figure}[hbt]
  \centering
  \makeatletter
\pgfdeclareshape{datastore}{
  \inheritsavedanchors[from=rectangle]
  \inheritanchorborder[from=rectangle]
  \inheritanchor[from=rectangle]{center}
  \inheritanchor[from=rectangle]{base}
  \inheritanchor[from=rectangle]{north}
  \inheritanchor[from=rectangle]{north east}
  \inheritanchor[from=rectangle]{east}
  \inheritanchor[from=rectangle]{south east}
  \inheritanchor[from=rectangle]{south}
  \inheritanchor[from=rectangle]{south west}
  \inheritanchor[from=rectangle]{west}
  \inheritanchor[from=rectangle]{north west}
  \backgroundpath{
    \southwest \pgf@xa=\pgf@x \pgf@ya=\pgf@y
    \northeast \pgf@xb=\pgf@x \pgf@yb=\pgf@y
    \pgfpathmoveto{\pgfpoint{\pgf@xa}{\pgf@ya}}
    \pgfpathlineto{\pgfpoint{\pgf@xb}{\pgf@ya}}
    \pgfpathmoveto{\pgfpoint{\pgf@xa}{\pgf@yb}}
    \pgfpathlineto{\pgfpoint{\pgf@xb}{\pgf@yb}}
}
}
\makeatother
\usetikzlibrary{arrows}
\begin{tikzpicture}[
  every matrix/.style={ampersand replacement=\&,column sep=1cm,row sep=0.25cm},
  sink/.style={draw,thick,rounded corners,fill=gray!20,inner sep=.1cm},
  datastore/.style={draw,very thick,shape=datastore,inner sep=.1cm},
  to/.style={->,>=stealth',shorten >=1pt,semithick},
  every node/.style={align=center}]
  \matrix{
    \& \node[datastore] (ang) {Given rotation angle $\theta$ and precision $\eps$
    }; \& \\
    \& \node[sink] (zero) {$t\gets0, out\gets None$}; \& \\
    \& \node[sink] (appr) {Compute $\tau\in \SL(2,\Z)$ as in Theorem 9}; \& \\
    \& \node[sink] (rsea) {While $out=None$}; \& \\
    \& \node[sink] (design) {$t \gets t+1;$ Explore $C_t-C_{t-1}$}; \& \\
    \& \node[sink] (synt)
    {For each $\sqrt{5^{-t}}u\in C_t-C_{t-1}$
    while $out=None$ }; \& \\
    \& \node[sink] (recur) {If $P(5^t-|u|^2)\ne Nil$ \\
    $v\gets P(5^t-|u|^2$) \\
    $out \gets \{u,v\}$  }; \& \\
    \& \node[sink] (out) {Return $\sqrt{5^{-t}} \, \begin{sm}u&-v^*\\v&u^*\end{sm} $ }; \& \\
  };
    \draw[to] (zero) --(appr);
                \draw[to] (ang) --(zero);
                \draw[to] (appr) --(rsea);
                \draw[to] (rsea) --(design);
                \draw[to] (design) --(synt);
                \draw[to] (synt) --(recur);
    \draw[to] (recur) --(out);
\end{tikzpicture}
\caption{First synthesis algorithm}
\label{fig:syn1}
\end{figure}
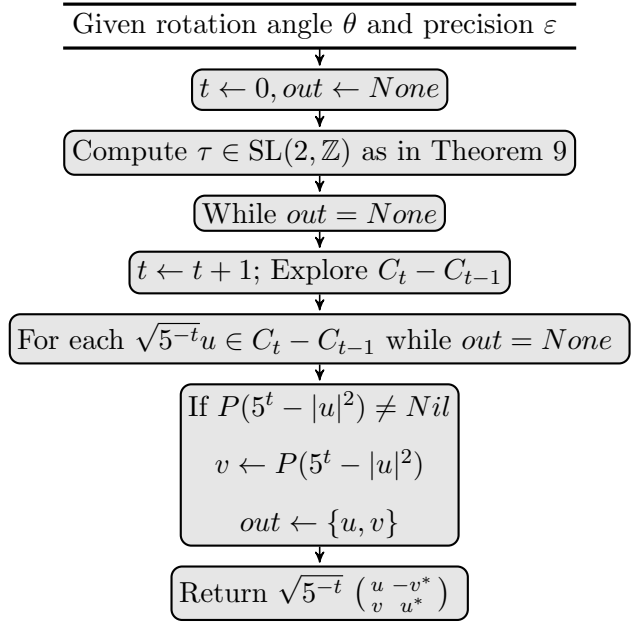

\smallskip\noindent\emph{Procedure~$P_1$}\\
This (and only this) version of $P$ presumes the availability of a quantum computer.
\begin{enumerate}
\item Use Shor's algorithm [Ref.~\onlinecite{Shor}] to factor the given number $n$. Shor's algorithm works in polynomial time on a quantum computer.
\item Return Nil if some prime factor of $n$ of the form $4m+3$ has an odd exponent in the representation of $n$ as a product of powers of distinct primes.
\item Use the algorithm of Lemma~\ref{lem:S2S} to find a representation $n=c^2+d^2$ of $n$ as a sum of two squares; return $c+d i$.
\end{enumerate}

\smallskip\noindent\emph{Procedure~$P_2$}
\begin{enumerate}
\item Check whether $n$ has the form $4m+1$, and return Nil if not.
\item Use the Lenstra-Pomerance version of the AKS primality test that runs in time $(\log n)^6\cdot (2+\log\log n)^c$ for some real constant $c$ [Ref.~\onlinecite{LP}]. (Alternatively use Miller's primality test that runs in time $O(\log n)^4$ but assumes the Extended Riemann Hypothesis [Ref.~\onlinecite{Miller}].) Return Nil if $n$ is composite.
\item Use the Rabin-Shallit algorithm to find a representation $n=c^2+d^2$ of $n$; return $c+d i$.
\end{enumerate}

The first variant of the synthesis algorithm runs in expected polynomial time and produces an approximating circuit of the least possible depth.

\begin{theorem}\label{thm:syn2}
Given a target angle $\theta$ and real $\eps>0$, the second variant of the synthesis algorithm works in expected polynomial time and constructs a Pauli+V circuit of depth at most
$3\log_5\frac1\eps + O(\log\log\frac1\eps)$ that approximates the axial rotation $R_z(\theta)$ within $\eps$.
\end{theorem}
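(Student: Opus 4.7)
The plan is to invoke the general Scenario~2 machinery of \S\ref{sec:detsearch} with the Pauli+V specialisation of \S\ref{sub:candidates}, using procedure $P_2$. Procedure $P_2$ returns a non-Nil value exactly when its input $n = 5^t - |u|^2$ is a prime of the form $4m+1$, so the synthesis loop terminates at the least level $t^\ast$ at which $W'_{t^\ast}$ is nonempty. At that point, part~(3) of Theorem~\ref{thm:free} together with Corollary~\ref{cor:meniscus} ensures that the returned matrix is exactly realisable by a Pauli+V circuit of V-count $t^\ast$ and is at trace distance $<\eps$ from $R_z(\theta)$.

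To bound $t^\ast$, I would first pin down the level $t_0$ in constraint C3. By Lemma~\ref{lem:arc}, the meniscus $M_\eps(\theta)$ has area $\Theta(\eps^3)$; this area is preserved by $\tau\in\SL(2,\Z)$ from Theorem~\ref{thm:tau} and scaled by $5^t$ under the inflation $z\mapsto\sqrt{5^t}\,z$. Meanwhile, Theorem~\ref{thm:tau} confines $\tau M$ to a band of width $O(\eps^{3/2})$, so the perimeter of $\sqrt{5^t}\,\tau M$ is $O(\sqrt{5^t}\,\eps)$. A standard Gauss-type count of lattice points in a convex region then gives $|C_t| = \Theta(5^t\eps^3) + O(\sqrt{5^t}\,\eps)$, whose main term dominates exactly when $t\ge t_0$ with $t_0 = 3\log_5\frac1\eps + O(1)$. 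Feeding this into Lemma~\ref{lem:explore'} (whose hypothesis C4$'$ holds under Conjecture~\ref{con:2}) with $k=1$ yields $t^\ast = t_0 + O(\log t_0) = 3\log_5\frac1\eps + O(\log\log\frac1\eps)$, which is the claimed depth bound.

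For the running time I would argue as follows. The enumeration of $C_t\setminus C_{t-1}$ at each level $t$ takes time polynomial in $t$ per candidate, by constraint~C2 as verified in \S\ref{sub:candidates} via the band-width bound of Theorem~\ref{thm:tau}. Each call to $P_2$ costs expected time $\mathrm{poly}(\log n) = \mathrm{poly}(t)$: the Lenstra-Pomerance AKS variant decides primality deterministically in time $(\log n)^6(2+\log\log n)^{O(1)}$, and Rabin-Shallit produces the S2S representation in expected time $O(\log n)$ once primality is confirmed. The total number of candidates inspected up to level $t^\ast$ is $\sum_{t\le t^\ast}|C_t| = O(|C_{t^\ast}|) = O(5^{t^\ast - t_0})$, and since $5^{t^\ast - t_0} = 5^{O(\log t_0)} = \bigl(\log\tfrac1\eps\bigr)^{O(1)}$, the overall expected running time is polynomial in $\log\frac1\eps$.

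The main obstacle, in my view, is the geometric/lattice estimate that ties $|C_t|$ to $5^t\eps^3$ uniformly in $\theta$; this is precisely where Theorem~\ref{thm:tau} earns its keep, since only the $O(\eps^{3/2})$ band-width of $\tau M$ keeps the boundary contribution from swamping the area term once $t\ge t_0$. Everything else -- termination via Lemma~\ref{lem:explore'}, the per-candidate cost, and the final passage from the output matrix to a Pauli+V circuit -- is an assembly of facts already established in \S\ref{sec:detsearch} and \S\ref{sub:candidates}.
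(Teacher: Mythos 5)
Your proposal is correct and follows essentially the same route as the paper: correctness from the Scenario~2 framework and Corollary~\ref{cor:meniscus}, the depth bound from Lemma~\ref{lem:explore'} combined with the area estimate $\Area(\sqrt{5^t}M)=\Theta(5^t\eps^3)$ giving $t_0\le 3\log_5(1/\eps)+O(1)$, and the running time from the polynomial candidate count enabled by the $O(\eps^{3/2})$ band of Theorem~\ref{thm:tau}. The only cosmetic difference is that you invoke an explicit Gauss-type lattice count with a boundary term where the paper cites Huxley's asymptotic equality of lattice-point count and area.
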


\begin{proof}\mbox{}

\noindent\emph{Correctness}\quad
It should be obvious at this point that the algorithm produces a circuit that approximates $R_z(\theta)$ within $\eps$.

\smallskip\noindent\emph{Circuit depth}\quad
By Lemma~\ref{lem:explore'}, the synthesis algorithm produces a circuit of depth at most $t_0 + O(\log t_0)$ where $t_0$ is the least index $t$ such that $|C_t|\ge2$. It suffices to prove the following lemma.

\begin{lemma}\label{lem:t0}
$t_0 \le 3\log_5(1/\eps)$ for sufficiently small positive $\eps$.
\end{lemma}

\begin{proof}[Proof of Lemma~\ref{lem:t0}]
Let $M= M_\eps(\theta)$. By Lemma~\ref{lem:arc}, $\Area(M) \ge c\eps^3$ where $c\approx 2\sqrt2$. Let $s = 3\log_5(1/\eps)$.  Then
\[
  \Area(\sqrt{5^s}M) \ge 5^s\cdot c\eps^3
  = (1/\eps)^3 c\eps^3 = c > 2.
\]
The number $N(t)$ of Gaussian integers in $\sqrt{5^t}M$ is asymptotically equal to $\Area(\sqrt{5^t}M)$ as $t\to\infty$ [Ref.~\onlinecite{Huxley}]. So, for sufficiently small positive $\eps$, $N(s)>2$ and  $\sqrt{5^s}M$ contains at least 2 Gaussian integers.
\end{proof}

\smallskip\noindent\emph{Running time}\quad
By Lemma~\ref{lem:explore'}, $W'_t$ contains at least two elements for some $t$ of the form
$t_0 + b\log t_0$ where $b\ge0$. Accordingly, the synthesis algorithm needs to explore only candidates of levels $\le t$.
Each such candidate has the form $\frac u{\sqrt{5^t}}$ where $u$ is a Gaussian integer in $\sqrt{5^t}M$. The number of such Gaussian integers is asymptotically equal to $\Area(\sqrt{5^t}M) = 5^t\Area(M) \le 5^t4\sqrt2\eps^3$ where the inequality follows from Lemma~\ref{lem:arc}.
By Lemma~\ref{lem:t0},
$t= 3\log_5(1/\eps) + d\log_5\log(1/\eps)$
for some $d\ge0$, so that
the algorithm needs to explore at most
\[
  (1/\eps)^3 \log^d(1/\eps) 4\sqrt2 \eps^3 =
  \Theta(\log^d(1/\eps))
\]
candidates, that is only a polynomial number of candidates.

In the previous subsection, we explained how to enumerate all the candidates in $C_t$. To this end we have to walk through all the vertical segments $[n+ig(n),n+if(n)]$ where $l_t\le n\le r_t$. There are only polynomially many of those vertical segments, namely
\begin{equation}\label{half3}
\sqrt{5^t}\eps^{3/2} = (1/\eps)^{3/2} \log^{d/2}(1/\eps)
\eps^{3/2} = \log^{d/2}(1/\eps).
\end{equation}
Each vertical segment contains only polynomially many Gaussian integers, and the exploration of one candidate involves a procedure $P$ that runs in expected polynomial time. The whole algorithm runs in expected polynomial time
\end{proof}

\begin{remark}
The reader may wonder whether the transformation $\tau$ was necessary. It was. The original $M$ is enclosed in a vertical band of width that may be of the order of $\eps$. Replacing $\eps^{3/2}$ with $\eps$ gives us exponentially many vertical segments.
\end{remark}

\subsection{The second circuit-synthesis algorithm}
\label{sub:syn2}

The only essential difference of our second synthesis algorithm SA2 from the first one is its version of procedure $P$ whose input includes a positive real $\delta$ in addition to an integer. But this influences the forms of input and output of SA2. The input of SA2 comprises three components: a target angle $\theta$, a precision $\eps>0$ and an error tolerance $\delta>0$. SA2 returns either an approximation to the rotation $R_z$ or Nil. The probability of returning Nil is at most $\delta$. SA2 is presented in Figure~\ref{fig:syn2}.

\begin{figure}[hbt]
  \centering
  \makeatletter
\pgfdeclareshape{datastore}{
  \inheritsavedanchors[from=rectangle]
  \inheritanchorborder[from=rectangle]
  \inheritanchor[from=rectangle]{center}
  \inheritanchor[from=rectangle]{base}
  \inheritanchor[from=rectangle]{north}
  \inheritanchor[from=rectangle]{north east}
  \inheritanchor[from=rectangle]{east}
  \inheritanchor[from=rectangle]{south east}
  \inheritanchor[from=rectangle]{south}
  \inheritanchor[from=rectangle]{south west}
  \inheritanchor[from=rectangle]{west}
  \inheritanchor[from=rectangle]{north west}
  \backgroundpath{
    \southwest \pgf@xa=\pgf@x \pgf@ya=\pgf@y
    \northeast \pgf@xb=\pgf@x \pgf@yb=\pgf@y
    \pgfpathmoveto{\pgfpoint{\pgf@xa}{\pgf@ya}}
    \pgfpathlineto{\pgfpoint{\pgf@xb}{\pgf@ya}}
    \pgfpathmoveto{\pgfpoint{\pgf@xa}{\pgf@yb}}
    \pgfpathlineto{\pgfpoint{\pgf@xb}{\pgf@yb}}
}
}
\makeatother
\usetikzlibrary{arrows}
\begin{tikzpicture}[
  every matrix/.style={ampersand replacement=\&,column sep=1cm,row sep=0.25cm},
  sink/.style={draw,thick,rounded corners,fill=gray!20,inner sep=.1cm},
  datastore/.style={draw,very thick,shape=datastore,inner sep=.1cm},
  to/.style={->,shorten >=1pt,semithick},
  every node/.style={align=center}]
  \matrix{
    \& \node[datastore] (ang)
    {Given angle $\theta$, precision $\eps$ and error tolerance $\delta$
    }; \& \\
    \& \node[sink] (zero) {$t\gets0, out\gets None$}; \& \\
    \& \node[sink] (appr) {Compute $\tau\in \SL(2,\Z)$ as in Theorem 9}; \& \\
    \& \node[sink] (rsea) {While $out=None$}; \& \\
    \& \node[sink] (design) {$t \gets t+1;$ Explore $C_t-C_{t-1}$}; \& \\
    \& \node[sink] (synt)
    {For each $\sqrt{5^{-t}}u\in C_t-C_{t-1}$
    while $out=None$ }; \& \\
    \& \node[sink] (recur)
    {If $P(5^t-|u|^2,\delta)\ne Nil$ \\
    $v\gets P(5^t-|u|^2,\delta$) \\
    $out \gets \{u,v\}$  }; \& \\
    \& \node[sink] (out) {Return $\sqrt{5^{-t}} \, \begin{sm}u&-v^*\\v&u^*\end{sm} $ }; \& \\
  };
    \draw[to] (zero) --(appr);
                \draw[to] (ang) --(zero);
                \draw[to] (appr) --(rsea);
                \draw[to] (rsea) --(design);
                \draw[to] (design) --(synt);
                \draw[to] (synt) --(recur);
    \draw[to] (recur) --(out);
\end{tikzpicture}
\caption{Second synthesis algorithm}
\label{fig:syn2}
\end{figure}
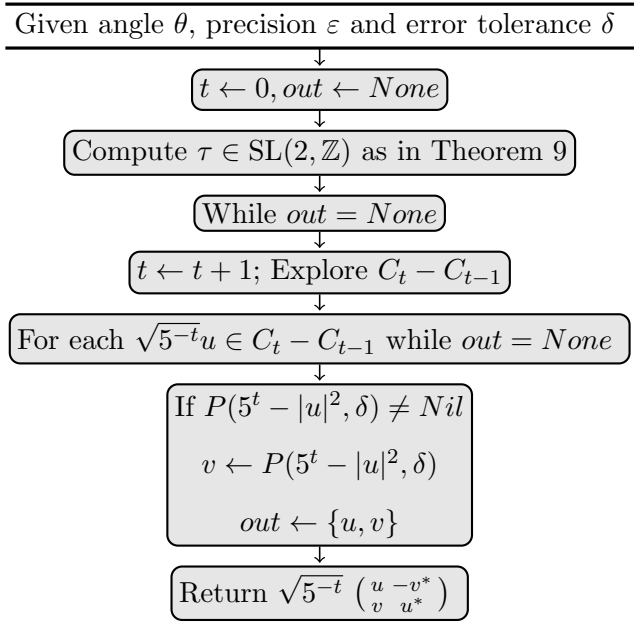

But, before we describe the new version of $P$, let's address Rabin's primality test [Ref.~\onlinecite{RabinPrime}]. It is an efficient polynomial-time primality test with a parameter $k$. If $n$ is prime then the test result is always correct. For a composite $n$ the test may declare $n$ to be prime, but the probability of such error is tiny. ``The algorithm produces and employs certain random integers $0 < b_1,\dots,b_k < n$\dots'' writes Rabin in [Ref.~\onlinecite{RabinPrime}], ``the probability that a composite number will be erroneously asserted to be prime is smaller than $\frac1{2^{2k}}$. If, say, $k = 50$, then the probability of error is at most $\frac1{2^{100}}$.''

Now we are ready to describe the new version of procedure $P$.
\begin{enumerate}
\item Check whether $n$ has the form $4m+1$, and return Nil if not.
\item Use Rabin's primality test with $k\ge \frac12  \log\frac1\delta $. Return Nil if $n$ is found to be composite.
\item Apply the algorithm of Proposition~\ref{pro:R2} to $n$ and $\delta$.
\end{enumerate}

\begin{corollary}
  The new procedure $P$ returns an S2S representation of $n$ or Nil. The probability of returning Nil is at most $\delta$.
\end{corollary}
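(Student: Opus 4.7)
The corollary asserts two properties of the new $P$: every non-Nil output is a bona fide S2S representation of $n$, and Nil is emitted with probability at most $\delta$. I would first note that the probability bound is substantive precisely when Nil would be an erroneous answer, i.e.\ in the regime that the outer synthesis algorithm cares about, namely $n$ a prime of the form $4m+1$; in every other case Nil is a permissible (often correct) output. My plan is to walk through the three steps of $P$ and read each claim off step by step.

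For correctness, I would observe that Steps~1 and~2 can only emit Nil, so every non-Nil output must originate in Step~3, which invokes the algorithm $A''$ from the proof of Proposition~\ref{pro:R2}. By construction $A''$ explicitly checks $c^2+d^2=n$ on any candidate pair $(c,d)$ produced by the underlying Rabin--Shallit subroutine and returns $c+di$ only when the check passes. Consequently any non-Nil output of $P$ is a certified S2S representation of $n$.

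For the probability bound, I would condition on $n$ being a prime of the form $4m+1$. Step~1 is passed deterministically, since $n \equiv 1 \pmod 4$. Step~2 never returns Nil on such an $n$ because Rabin's primality test has one-sided error: it misclassifies composites as prime with probability at most $1/2^{2k} \le \delta$, but true primes are always certified as prime. At Step~3, Proposition~\ref{pro:R2} applied to the prime $n$ and tolerance $\delta$ yields an S2S representation with probability $>1-\delta$, so Nil is emitted with probability strictly less than $\delta$. Composing, $\Pr[P(n,\delta) = \mathrm{Nil}] < \delta$, which is the stated bound. The only subtlety---the principal ``obstacle,'' such as it is---is recognizing the \emph{one-sided} character of Rabin's test; once that is in hand, Step~2 contributes zero failure probability on the prime inputs that matter, and the overall bound collapses to the guarantee already proved in Proposition~\ref{pro:R2}.
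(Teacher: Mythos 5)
Your proof is correct and matches the reasoning the paper leaves implicit (the corollary is stated there without proof): non-Nil outputs are genuine S2S representations because $A''$ in Proposition~\ref{pro:R2} verifies the candidate pair, and the $\delta$ bound on Nil --- read, as it must be, for $n$ a prime of the form $4m+1$ --- follows from the one-sidedness of Rabin's test at Step~2 together with the $>1-\delta$ success guarantee of Proposition~\ref{pro:R2} at Step~3. Your parenthetical claim that composites pass Rabin's test with probability $1/2^{2k}\le\delta$ is not quite right if $\log$ is the natural logarithm (as the paper's default suggests), but your argument never uses it, since only the one-sided character of the test matters here.
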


\begin{theorem}\label{thm:syn3}
Given a target angle $\theta$ and real $\eps,\delta>0$, the second synthesis algorithm SA2 works in polynomial time and returns Nil or constructs a Pauli+V circuit of depth at most $3\log_5\frac1\eps + O(\log\log\frac1\eps)$ that approximates the axial rotation $R_z(\theta)$ within $\eps$. The probability of returning Nil is at most $\delta$.
\end{theorem}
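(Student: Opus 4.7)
The plan is to mirror the proof of Theorem~\ref{thm:syn2}, changing only what the new version of procedure $P$ forces us to change. I will check four things: correctness of any non-Nil output, the depth bound on the output circuit, polynomial running time, and the upper bound $\delta$ on the Nil-probability. Correctness is unchanged: SA2 only assigns $out \leftarrow \{u,v\}$ after $P$ has returned a value $v = c+di$ verified to satisfy the norm equation, so $\NE_t(u,v)$ holds; Constraint~C1 then gives a Pauli+V circuit of depth $\le t$, and Corollary~\ref{cor:meniscus} places that circuit within trace distance $\eps$ of $R_z(\theta)$.

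For the depth and running-time bounds, the key move is to equip SA2 with an explicit cap $T = 3\log_5(1/\eps) + O(\log\log(1/\eps))$ on $t$: once $t$ exceeds $T$ the algorithm exits its main loop and returns Nil. Lemma~\ref{lem:t0} yields $t_0 \le 3\log_5(1/\eps)$, and Lemma~\ref{lem:explore'} applied with $k=1$ gives a level $t^* = t_0 + O(\log t_0) \le T$ with $W'_{t^*}$ nonempty; fix such a $t^*$ together with a witness $u^* \in W'_{t^*}$. Combining Theorem~\ref{thm:tau} with the enumeration/counting argument used in the running-time paragraph of Theorem~\ref{thm:syn2} shows that the total number of candidates inspected up to level $T$ is polylogarithmic in $1/\eps$, and each invocation of the new $P$ runs in time polynomial in $\log(1/\eps)$ and $\log(1/\delta)$. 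This simultaneously gives the claimed depth bound on any output circuit and an overall running time polynomial in $\log(1/\eps)$ and $\log(1/\delta)$.

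For the Nil-probability, note that if SA2 returns Nil then it has in particular invoked $P$ on the input $n^*$ associated with $u^*$ and received Nil back. Since $u^* \in W'_{t^*}$, the integer $n^*$ is a prime of the form $4m+1$: step~1 of the new $P$ passes, Rabin's test has zero error on genuine primes and so passes as well, and by Proposition~\ref{pro:R2} the algorithm of step~3 fails to return a valid S2S representation of $n^*$ only with probability less than $\delta$. Hence SA2 returns Nil with probability at most $\delta$. The main obstacle here is conceptual rather than technical: one has to notice that a single winning candidate suffices---neither $k$ copies nor $\Theta(\log(1/\delta))$ copies are needed---because the $\delta$ used internally by $P$ is precisely the tolerance $\delta$ in the theorem, so Lemma~\ref{lem:explore'} with $k=1$ already delivers the desired bound.
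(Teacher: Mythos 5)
Your proposal is correct and follows essentially the same route as the paper's proof, which simply reduces Theorem~\ref{thm:syn3} to the analysis of the first synthesis algorithm (Theorem~\ref{thm:syn2}) and charges the probability $\delta$ to the new procedure $P$. In fact your write-up is more careful than the paper's two-sentence argument: you make explicit the cap on $t$ that SA2 needs in order to return Nil in worst-case polynomial time, and you correctly locate the source of the failure probability in Proposition~\ref{pro:R2} (Rabin's test never errs on the prime $n^*$), whereas the paper loosely attributes it to the primality test erring.
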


\begin{proof}
If the primality test works as intended, SA2 works essentially like the first one. If the primality test errs, which happens with probability at most $\delta$, SA2 returns Nil or constructs a Pauli+V circuit of depth at most $3\log_5(1/\eps) + O(\log\log\frac1\eps)$ that approximates the axial rotation $R_z(\theta)$ within $\eps$.
\end{proof}

\appendix

\section{Three Free Rotations} \label{three:free:rotation}

In this appendix, we prove item~(1) of Theorem~\ref{thm:free}.  We shall be concerned with reduced words $w$ built from the formal symbols (letters) $l_1$, $l_2$, $l_3$, and their (formal) inverses. (As in the theorem, ``reduced'' means that no $l_i$ occurs next to its own inverse in $w$.) We write $w[R]$ for the product obtained by replacing each formal symbol $l_i$ by the corresponding matrix $R_i$ exhibited just before the statement of Theorem~\ref{thm:free}, and replacing ${l_i}^{-1}$ by the inverse matrix (also the transpose, as the matrices are orthogonal).  We must show that, if a reduced word $w$ has length $t$, then the corresponding matrix $w[R]$ contains at least one entry which, when written as a fraction in lowest terms, has denominator $5^t$.

We shall prove this result by induction on the length $t$ of the word. It is clearly true for $t=0$ and $t=1$.  For the induction step, we begin with some preliminary considerations to simplify the problem.

Each of the matrices $R_i$ and each of their inverses can be written as $1/5$ times an integer matrix $S_i$, namely
\[
  S_1=
  \begin{pmatrix}
    5&0&0\\0&-3&4\\0&-4&-3
  \end{pmatrix},\quad
S_2=
\begin{pmatrix}
  -3&0&-4\\0&5&0\\4&0&-3
\end{pmatrix},\quad
S_3=
\begin{pmatrix}
  -3&4&0\\-4&-3&0\\0&0&5
\end{pmatrix}.
\]
and $S_1^\top, S_2^\top,S_3^\top$.  Note that we are factoring $1/5$ out of each $R_i$ and each of the inverses $R_i^{-1}=R_i^\top$, so the remaining factors are the matrices $S_i$ and $S_i^\top$, not $S_i^{-1}$ (which differs from $S_i^\top$ by a factor 25).

Thus, if a reduced word $w$ has length $t$, then $w[R]=(1/5^t)w[S]$, where $w[S]$ is obtained from $w$ by replacing the letters $l_i$ and their inverses $l_i^{-1}$ in $w$ by the matrices $S_i$ and their transposes $S_i^\top$.  What we must prove is that, in such a product $w[S]$ of $S$ matrices, we never have all the entries divisible by 5; then, in any entry that is not divisible by 5, the overall factor of $1/5^t$ provides the denominator required for our result.

We have thus reduced our task to showing that, if we take a reduced word $w$ and substitute for each letter $l_i$ the corresponding $S_i$ and for each ${l_i}^{-1}$ the transpose $S_i^\top$ of the corresponding $S_i$, then the product matrix $w[S]$ will not have all its entries divisible by 5.

We can reduce the task further.  Since we are interested only in divisibility by 5, we can reduce all entries in the $S$ matrices, and their transposes, and their products, modulo 5.  That is, we can perform the whole calculation using matrices with entries in $\mathbb Z/5$, namely the matrices
\[
  T_1=
  \begin{pmatrix}
    0&0&0\\0&2&4\\0&1&2
  \end{pmatrix},\quad
  T_2=
\begin{pmatrix}
2&0&1\\0&0&0\\4&0&2
  \end{pmatrix},\quad
T_3=
\begin{pmatrix}
  2&4&0\\1&2&0\\0&0&0
\end{pmatrix}
\]
obtained by reducing the $S$ matrices modulo 5.  As with the $S$ matrices, we use the notation $w[T]$ for the result of taking a word $w$, replacing each $l_i$ and $l_i^{-1}$ with $T_i$ and the transpose $T_i^\top$, and multiplying the resulting matrices.  Thus, $w[T]$ is a $3\times 3$ matrix over the 5-element field $\mathbb Z/5$.

Our goal, that the product $w[S]$ of $S$ matrices obtained from a reduced word $w$ does not have all entries divisible by 5, can now be reformulated as saying that the product $w[T]$ of $T$ matrices obtained from a reduced word is not the zero matrix.

Not only are the $T$ matrices singular (because each has a column of zeros) but they have rank only 1, because, in each of them, the two non-zero columns are proportional.  The same goes for the transposes of these matrices (either by similar inspection or because transposing a matrix doesn't change its rank).  Let us write $L_i$ for the one-dimensional subspace of $(\mathbb Z/5)^3$ that is the range of $T_i$ and $L_i'$ for the range of the transpose $T_i^\top$.  Thus, $L_i$ (resp.\ $L_i'$) is generated as a vector space by either of the non-zero columns (resp.\ rows) of $T_i$.

With these preparations, we are ready for the induction step in the proof.  Suppose the claim is true for reduced words of a certain length $t\geq1$, and suppose we are given a reduced word of length $t+1$, say $QW$, where $Q$ is the first letter of our given word and $w$ is all the rest, i.e., a word of length $t$.  The first letter of $w$ (which exists as $t\geq1$) is either some $l_i$ or some $l_i^{-1}$.

Let us consider first the case that the first letter of $w$ is $l_i$. (The case of an inverse $l_i^{-1}$ will be similar.)  By induction hypothesis, the matrix $w[T]$ is not the zero matrix. Its range is included in the range $L_i$ of $T_i$ (because the range of any matrix product $JK$ is included in the range of the left factor $J$) and must therefore be all of $L_i$ (since $L_i$ has dimension only 1).  To show that $QW$ also corresponds to a non-zero matrix, it therefore suffices to show that $Q[T]$ does not annihilate $L_i$.  (As $Q$ is a single letter or inverse, $Q[T]$ is a single $T$ matrix or transpose.)  Here it is important that $Q$ is not $l_i^{-1}$, because $QW$ is a reduced word.  So we need only check that the range $L_i$ of $T_i$ is not annihilated by any of the $T_j$'s or their transposes, except for $T_i^\top$, and this is just a matter of inspection (the many 0's in the matrices make the computations trivial).

In the case where the first letter of $w $ is some $l_i^{-1}$, we similarly reduce the problem to showing that the range $L_i'$ of $T_i^\top$ is not annihilated by any of the $T_j$'s or their transposes except for $T_i$.  This again can be done by inspection, thereby completing the induction and thus the proof of Theorem~\ref{thm:free}(1).

\end{document}